\newtheorem{proposition}{Proposition}
\newtheorem{definition}{Definition}
\title{K-sample Multiple Hypothesis Testing for Signal Detection}
\author[1]{Uriel Shiterburd}
\author[1]{Tamir Bendory}
\author[2]{Amichai Painsky}
\affil[1]{School of Electrical Engineering,	Tel Aviv University, Tel Aviv, Israel}
\affil[2]{Department of Indusrial Engineering, Tel Aviv University, Tel Aviv, Israel}
\begin{document}
	
	\maketitle
	
	\begin{abstract}
		This paper studies the classical problem of estimating the locations of signal occurrences in a noisy measurement.
	    Based on a  multiple hypothesis testing  scheme,
		we design a $K$-sample statistical test to control the false discovery rate (FDR). 
		Specifically, we first convolve the noisy measurement with a smoothing kernel, and find all local maxima. 
        Then, we	evaluate the joint probability of $K$ entries in the vicinity of each local maximum, derive the corresponding p-value, and  apply the Benjamini-Hochberg procedure to account for multiplicity.
		We demonstrate through extensive experiments that our proposed method, with $K=2$,  controls the prescribed FDR while increasing the power compared to a one-sample test.
	\end{abstract}
	%
	%
	\section{Introduction}
	\label{sec:intro}
	
	We study the  classical problem of detecting the locations of multiple signal occurrences in a noisy measurement. 
	Figure~\ref{fig:into-ex} illustrates an instance of the problem,  where the goal is to estimate the locations of  five signal occurrences from the noisy observation.
	This detection problem is an essential step in a wide variety of  applications, including single-particle cryo-electron microscopy~\cite{frank2006three,scheres2012relion}, neuroimaging brain mapping~\cite{taylor2007detecting}, microscopy \cite{egner2007fluorescence,geisler2007resolution}, astronomy \cite{brutti2005spike} and retina study~\cite{baccus2002fast}.

	Assuming the shape of the signal is approximately known, the most popular heuristic to detect signal occurrences is  correlating the noisy measurement with  the signal, and choosing the peaks of the correlation as the estimates of the signals' locations. 
	If the signal is known, then this method is referred to as 
	\textit{a matched filter} \cite{pratt2007digital}.
	However, extending  the matched filter to multiple signal occurrences is not trivial, and may lead to  high false-alarm rates. This mostly manifests in high noise levels, and in cases where the signal occurrences are densely packed~\cite{Vio_2019}.
	
	Following the work of Schwartzman et al.~\cite{schwartzman2011multiple,cheng2017multiple},	we study the signal detection problem from the perspective of multiple hypothesis testing.
	In particular, Schwartzman et al.~\cite{schwartzman2011multiple} proposed 
	to choose the local maxima of the measurement, after it was correlated with a template which ideally resembles the underlying signal, as the estimators of the locations of the signal occurrences. 
	While this procedure is akin to matched filter, the main contribution of~\cite{schwartzman2011multiple} is setting the threshold of the detection problem based on the Benjamini-Hochberg procedure~\cite{benjamini1995controlling}  to control the false discovery rate  under a desired level.
	In Section \ref{sec:background} we introduce the results of \cite{schwartzman2011multiple} in detail. 
	Notably, the statistical test of  Schwartzman et al. exploits only  local maxima, while ignoring their local neighborhoods. Hereafter, we refer to this scheme as a one-sample statistical test.

	In this work, we extend the one-sample statistical test of~\cite{schwartzman2011multiple}  to account for the neighborhood of the local maxima. 
	In particular, we claim that taking the  neighborhood of the local maxima, or its shape, may strengthen the statistical test. For example, if the noise is highly correlated, a narrow and sharp peak is more likely to indicate on a signal occurrence than a wider peak.
	The first contribution of this paper is extending the 	framework of~\cite{schwartzman2011multiple} to a statistical test which is based on multiple samples around each local maxima; we refer to this procedure as the $K$-sample statistical test. 
	This entails calculating the joint probability distribution of the chosen $K$ samples and	deriving the corresponding p-value required for the Benjamini-Hochberg procedure. 
	
	The second contribution of this work is implementing the new framework for two-sample test ($K=2$), namely, when the statistical test relies on the local maxima and one additional point in their vicinity. 
	As discussed in Section \ref{sec:main},  the computational complexity of the K-sample test grows exponentially fast with $K$, and thus we leave the implementation of the $K$-sample test with $K>2$ for future work. We conduct comprehensive numerical experiments, and show that the two-sample test outperforms the one-sample test of~\cite{schwartzman2011multiple} in a wide range of scenarios.
	Figure~\ref{fig:into-ex} illustrates a simple example when our two-sample test outperforms the one-sample test of~\cite{schwartzman2011multiple}. 

	\begin{figure}[h]
		\centering
		\includegraphics[width=1\linewidth]{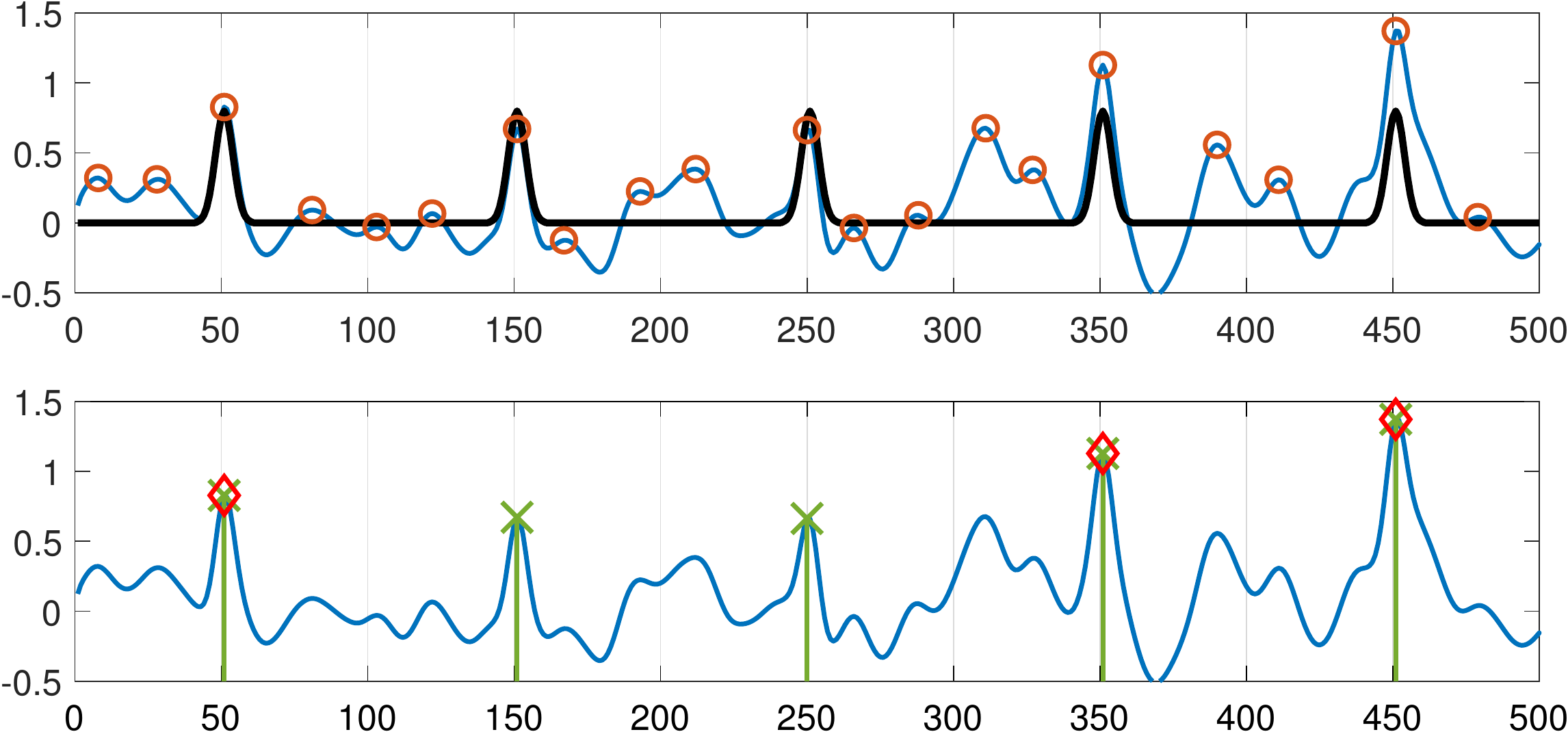}
		\caption{ The upper panel shows a noiseless observation with five signal occurrences (black), and a noisy measurement (blue) with the local maxima marked by orange circles. Our goal is to detect the locations of the five true signal occurrences from the noisy measurement.
			The bottom panel shows in red the result of the one-sample test of~\cite{schwartzman2011multiple} that captures only 3 signal occurrences, while our algorithm (Algorithm~\ref{alg:two sample}) accurately identifies all five signal occurrences (in green crosses). 
			\label{fig:into-ex}}
		\vskip -5pt
	\end{figure}

	The rest of the paper is organized as follows.  Section~\ref{sec:background} provides mathematical and statistical background. 
	Section~\ref{sec:main} presents the main result of this paper. Section~\ref{sec:exp} discusses computational  considerations and presents  numerical results. Section~\ref{sec:conclusion} concludes the paper and delineates future research directions.

	\section{Mathematical model and background}	\label{sec:background}

	Let $h_j(t) \in C(\mathbb{R})$ be a continuous signal  with a compact support~$S_j$ of unknown location and $\int_{s\in S_j}h_j(s) = 1$. 
	Let 
	\begin{equation}
		\mu(t) = \sum_{j = -\infty}^{\infty} a_j h_j(t), 
	\end{equation}
	be a train of unknown signals, where $a_j >0$.
	Our goal is to estimate the locations of the signals $h_j$ from a noisy measurement $y(t) \in C(\mathbb{R})$ of the form
	\begin{equation}\label{singalpnoise}
		y(t) = \mu(t) + z(t), 
	\end{equation}
	where $z(t)$ denotes the noise.

	In our problem, the null hypothesis is that there are no signals, namely, a pure noise measurement $y(t)=z(t)$. This problem can be thought of as a special case of multiple hypothesis testing \cite{rupert2012simultaneous}, where our hypothesizes are the points at which we suspect the signal occurrences are located.
	
	\subsection{Multiple hypothesis testing}
	
	
		In a classical (single) hypothesis testing problem, one considers two hypotheses (the null and the alternative) and decides  whether the data at hand sufficiently support the null. 
	    Specifically, given a set of observations (e.g., coin tosses), our goal is to increase the probability of making a true discovery (e.g., declare that a coin is biased given that it is truly biased), while maintaining a prescribed level of type I error, $\alpha$ (declaring that a coin is biased given that it is not). This is typically done by examining the p-value: the probability of obtaining test results at least as extreme as the result actually observed, under the assumption that the null hypothesis is correct. 
		Specifically, we declare a discovery if the p-value is not greater than $\alpha$. 
		
	Multiple testing arises when a statistical analysis involves multiple simultaneous tests, each of which has a potential to produce a ''discovery". A stated confidence level generally applies only to each test considered individually, but often it is desirable to have a confidence level for the whole family of simultaneous tests. Failure to compensate for multiple comparisons may produce severe consequences. For example, consider a set of $m$ coins, where our goal is to detect biased coins. Assume we test the hypothesis that a coin is biased with a confidence level of $\alpha$. Further, assume that all the coins are in-fact unbiased. Then, we expect to make $\alpha m$ false discoveries. Further, and perhaps more importantly, the probability that we make at least one false discovery (denoted \textit{family wise error rate}, FWER) increases as $m$ grows. This means we are very likely to report that there is at least one biased coin in hand, even if they are all fair.   
	A classical approach for the multiple testing problem is to control the FWER. This is done by applying the Bonferroni correction:  it can be shown that by performing each test in a confidence level of $\alpha/m$, we are guaranteed to have a FWER not greater than $\alpha$. The Bonferroni correction is very intuitive and simple to apply. Yet, it is also very conservative and thus very few discoveries are likely to be made.


	To overcome these caveats, Benjamini and Hochberg \cite{benjamini1995controlling} proposed an alternative criterion. Instead of controlling the FWER, they suggested controlling the \textit{false discovery rate} (FDR). The FDR is formally defined as the expected proportion of false discoveries from all the discoveries that are made. Thus, we allow more false discoveries if we make more discoveries. 
\begin{definition}
	The false discovery rate (FDR) is defined by 
	\begin{equation}
		\text{FDR} = \mathbb{E}\left( V/R|R>0\right),
	\end{equation}
	where $V$ and $R$ are, respectively, the number of false positives and all positives (rejections of null hypothesis).
\end{definition}
This criterion is evidently less conservative than the FWER. Thus, FDR-controlling procedures have greater power, at the cost of increased numbers of type I errors. Benjamini and Hochberg introduced a simple approach for controlling the FDR~\cite{benjamini1995controlling}. Their approach is based on sorting the p-values in ascending order $p_{(1)}\leq \dots\leq  p_{(m)}$. For a desired level $\alpha$, the Benjamini–Hochberg  procedure suggests finding  the largest $k$ such that $p_{(k)}\le \frac{k}{m}\alpha$ and reject null hypothesis for all observations of equal or less p-value.
The Benjamini-Hochberg procedure made a tremendous impact on modern statistical inference~\cite{benjamini2010simultaneous}, with many important applications to biology \cite{mcdonald2009handbook}, genetics~\cite{love2014moderated,tusher2001significance}, bioinformatics \cite{huang2009bioinformatics}, statistical learning~\cite{efron2004least}, to name but a few.

	\subsection{Controlling the FDR of local maxima}
	Going back to our problem, we aim to control the FDR of estimating the locations of the signal occurrences from the measurement $y(t)$.
	A direct approach is   testing on every sampled point in the measurement~\cite{taylor2007detecting}.
	However, this approach overlooks the spatial structure of the signals, and correlations in the measurement. In~\cite{schwartzman2011multiple},   Schwartzman, Gavrilov and Adler developed  a method to control the FDR, while  testing only  the local maxima  of a smoothed version of the measurement. 
	Specifically, 
	the algorithm of~\cite{schwartzman2011multiple} 
	begins with correlating the measurement $y(t)$  with a smoothing  kernel $w_\gamma(t)$ of compact connected support satisfying $\int_{t\in \mathbb{R}} w_\gamma(t) = 1$. This results in 
	\begin{equation}\label{eq: smoothing a sig}
		y_\gamma(t)= w_\gamma(t)*y(t) = \mu_\gamma(t) + z_\gamma(t),
	\end{equation}
	where $\mu_\gamma(t)$ and $z_\gamma(t)$ denote the smoothed signal train and noise, respectively.
	The next step is finding the set of all local maxima of the smoothed measurement:
	\begin{equation} \label{eq:T}
		T = \left\{t\in \mathbb{R}\,| \,\dot{y}_\gamma(t) = 0 \,,\, \ddot{y}_\gamma(t) < 0\right\}.
	\end{equation}
	To apply the Benjamini-Hochberg procedure, we next need to compute the p-value of  each local maxima $t\in T$ with respect to  the  null hypothesis ${\mathcal{H}_0 : y_\gamma(t) = z_\gamma(t)}$. 
	The p-value $p_\gamma(t)$ is given by
	\begin{equation} \label{eq:p_gamma}
		F_\gamma(y_\gamma(t)) = \text{Prob}(z_\gamma(t)> y_\gamma(t)\,|\, t \in {T}).
	\end{equation}
	This distribution, describing the probability of local maxima of $z_\gamma(t)$, is called the Palm distribution~\cite[Chapter 11]{cramer2013stationary}, and is given by the following CDF:
	\begin{equation}\label{eq:pval 1d}
		\begin{aligned}
			F_\gamma(u) = 1- \Phi\left(u\sqrt{\frac{\lambda_4}{\Delta}} \right)   +\sqrt{\frac{2\pi \lambda_2^2}{\lambda_4\sigma_\gamma^2}} 
			\phi\left(\frac{u}{\sigma_\gamma} \right)\Phi \left(u\sqrt{\frac{\lambda_2^2}{\Delta\sigma_\gamma^2}} \right),
		\end{aligned}
	\end{equation}
	where $\phi$ and $\Phi$ are the  normal density and its CDF respectively, while $\Delta = \sigma_\gamma^2\lambda_4-\lambda_2^2$, and 
	$\sigma_\gamma^2 = \text{Var}\{z_\gamma(t)\},\, \lambda_2 = \text{Var}\{\dot{z}_\gamma(t)\} , \,   \lambda_4 = \text{Var}\{\ddot{z}_\gamma(t)\},$	
	are the first three  moments of $z_\gamma(t)$.
	
	Finally, we apply  the Benjamini–Hochberg procedure, to control the  FDR at a desired level $\alpha$. The algorithm of Schwartzman et al., dubbed one-sample test, is summarized in Algorithm~\ref{alg:one sample}.
	Interestingly, it was shown  that this procedure achieves maximal power under asymptotic conditions~\cite{schwartzman2011multiple}.

	\begin{algorithm}[h]
		\DontPrintSemicolon
		\caption{One-sample test~\cite{schwartzman2011multiple}}\label{alg:one sample}
		\KwInput{Measurement $y(t)$, desired FDR level $\alpha$, and a smoothing kernel $w_\gamma(t)$}
		\KwOutput{A set $M$ of detected signal occurrences}
		Calculate $y_\gamma(t)$ according to (\ref{eq: smoothing a sig}) \;
		Find a set $T$~\eqref{eq:T} of all local maxima of $y_\gamma(t)$\;
		Calculate  the p-value of each point in $T$ according to~\eqref{eq:p_gamma}\;
		Get a set of estimated locations of signal occurrences $M \subset T$ using the Benjamini-Hochberg procedure for a desired FDR level $\alpha$
	\end{algorithm}

	\section{Main Result}\label{sec:main}
	Careful examination of local maxima reveals that not all peaks of the  same height are also of  a similar shape: the  existence of a signal occurrence underneath the noise varies the general shape of the local maximum; see Figure~\ref{fig:signal under noise} for an example.
	Accordingly, we suggest taking into account the vicinity  of the local maxima in the estimation process.
	In this section, we provide an explicit expression of the joint distribution of the local maximum and its closely-located samples. 
	This in turn allows us to  calculate the corresponding p-value which can be used to control the FDR using the Benjamini-Hochberg procedure. 
	
	\begin{figure}[h] 
		\centering
		\includegraphics[width=0.7\linewidth]{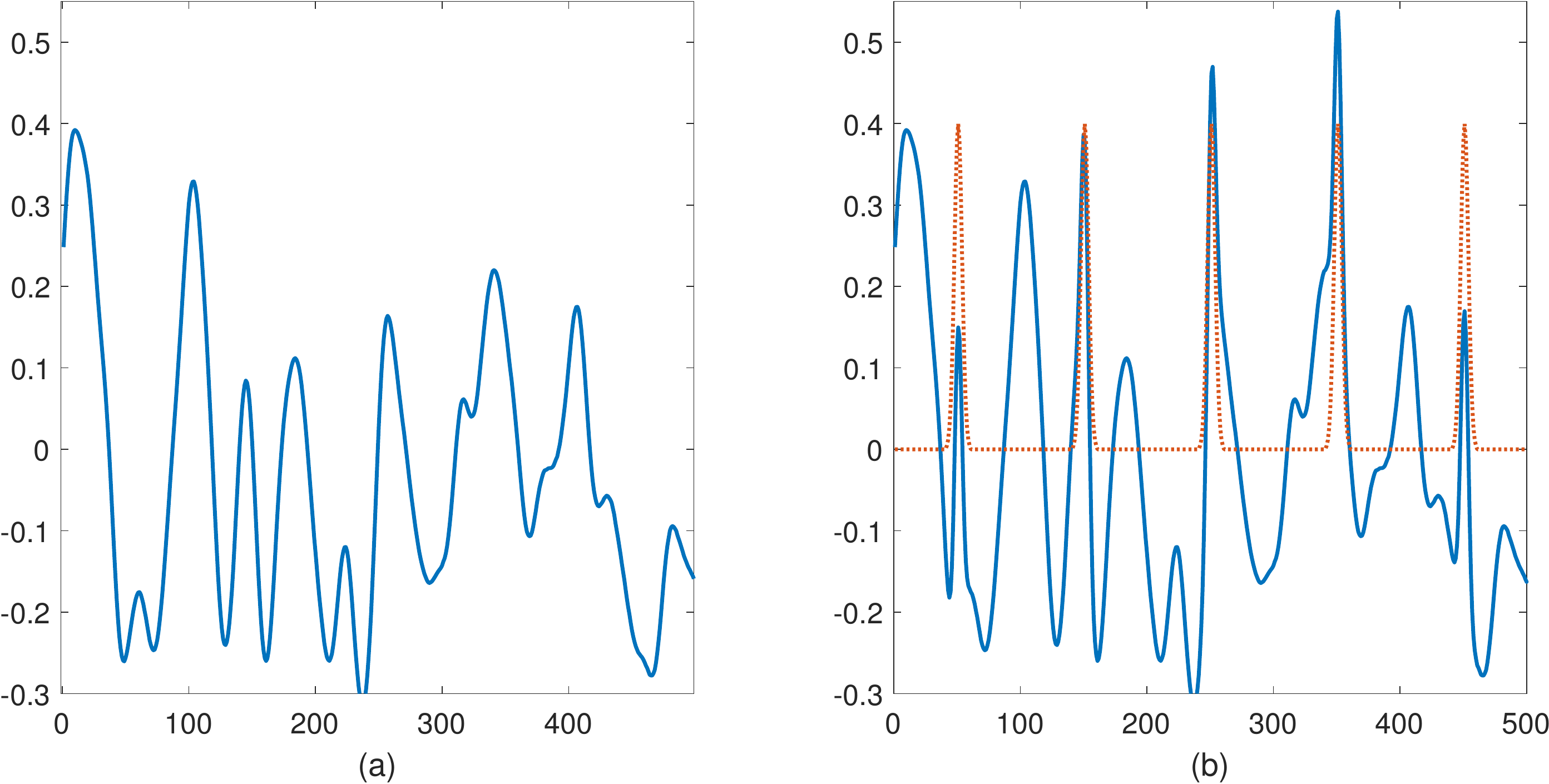}
		\caption{(a) Pure noise measurement. (b) The measurement from (a) with additional five signal occurrences (in red). As can be seen, the measurement in the vicinity of the signal occurrences is very peaky, although the corresponding local maxima are not necessarily high. 
			This motivates us to consider the vicinity of each local maxima, and not merely its height. 	
		}
		\label{fig:signal under noise}
		\vskip -5pt
	\end{figure}

	Let $z_{m}$ be the  maximum of a random   Gaussian vector $\underline{z}$. We assume that $\underline{z}$ consists of no more then a single local maximum. 
	Our goal is to evaluate  the joint distribution of $z_m$ and its $K-1$ neighbors, ordered in an arbitrary manner $\{z_{(i)}\}^{K-1}_{i=1}$,
	\begin{equation} \label{eq:joint_distribution}
		P(z_m, z_{(1)},z_{(2)}, \cdots,z_{(K-1)}).
	\end{equation}
	Applying the chain rule, we have
	\begin{align}\label{eq:generalCase}
		&P(z_m, z_{(1)},z_{(2)}, \ldots,z_{(K-1)})\nonumber\\ &= 
		P(z_{(1)},z_{(2)}, \ldots,z_{(K-1)}|z_m)P(z_m)\\
		&=P(z_{(K-1)}| z_{(K-2)}, \ldots,z_{(1)},z_m) \nonumber\\
		&\times P(z_{(K-2)}| z_{(K-3)}, \ldots,z_{(1)},z_m)\times \cdots \times P(z_{(1)}|z_m)P(z_m). \nonumber
	\end{align}
	The one-dimensional distribution of a local maxima $P(z_{m})$, or Palm distribution as mentioned earlier, was derived in~\cite{cramer2013stationary}. 
	Therefore, the following result  allows us to evaluate the joint distribution~\eqref{eq:joint_distribution}.
	
	\begin{proposition}
		Let $z_m=\text{max}({\underline{z}})$ be the maximum of a multivariate Gaussian vector $\underline{z}\in \mathbb{R}^{K}$. Let $\{z_{(i)}\}^{K-1}_{i=1}$ be an arbitrary order of points other than $z_m$. Define
		\begin{align}\nonumber
			\Phi_1(&\underline{z},i)=\\\nonumber
			&\int_{-\infty}^{z_m}\cdots 	\int_{-\infty}^{z_m}P(z_{(i+1)},\cdots,z_{(K-1)},z_{(i)}|z_{(i-1)},\cdots,z_{(1)},z_{m})\\\nonumber
			&\quad\quad \quad\quad \quad\quad\;\; dz_{(i+1)}\dots dz_{(K-1)},
		\end{align}
	and 	
		\begin{align}\nonumber
			\Phi_2(\underline{z},i&)=F(z_{(i+1)} = z_m,\cdots,z_{(K-1)} = z_m|z_{(i-1)},\cdots,z_{(1)},z_{m}).
		\end{align}
		Then, for any $i \le K-1$, we have:
		\begin{align}\nonumber
			P(z_{(i)}|z_{(i-1)},\cdots,&z_{(1)},z_m)=\frac{\Phi_1(\underline{z},i)}{
				\Phi_2(\underline{z},i)} 
			\mathbbm{1} (\cap_{j=1}^{i-1}\{z_{(j)}\le z_m\}),
		\end{align}
		where $F(\cdot)$ is the cumulative  distribution function (CDF) of $P(\cdot)$ and $\mathbbm{1}(\cdot)$ is an indicator function.
	\end{proposition}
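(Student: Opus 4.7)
The plan is to derive the claimed formula directly from the definition of a conditional density combined with Bayes' rule, treating the event ``$z_m=\text{max}(\underline{z})$'' as a conditioning event on the ambient multivariate Gaussian vector $\underline{z}$.

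First, I would apply Bayes' rule to write
\begin{equation*}
P(z_{(i)}\mid z_{(i-1)},\ldots,z_{(1)},z_m)
= \frac{P(z_{(i)},\, z_{(i+1)}\le z_m,\ldots,z_{(K-1)}\le z_m \mid z_{(i-1)},\ldots,z_{(1)},z_m)}{P(z_{(i+1)}\le z_m,\ldots,z_{(K-1)}\le z_m \mid z_{(i-1)},\ldots,z_{(1)},z_m)},
\end{equation*}
after noting that, once we have conditioned on $z_m$ and on the earlier ordered values, the residual content of the maximum constraint is precisely $\{z_{(i+1)}\le z_m,\ldots,z_{(K-1)}\le z_m\}$, with the analogous bound on $z_{(i)}$ absorbed into the support of the resulting density in $z_{(i)}$.

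Next, I would identify the denominator with $\Phi_2(\underline{z},i)$: it is exactly the joint conditional CDF of $(z_{(i+1)},\ldots,z_{(K-1)})$ given the earlier variables and $z_m$, evaluated at the upper limit $z_m$ in every coordinate. For the numerator I would marginalize the Gaussian conditional density of $(z_{(i)},z_{(i+1)},\ldots,z_{(K-1)})$ given $z_{(i-1)},\ldots,z_{(1)},z_m$ over $z_{(i+1)},\ldots,z_{(K-1)}\in(-\infty,z_m]$, producing $\Phi_1(\underline{z},i)$ as a function of $z_{(i)}$. The indicator $\mathbbm{1}(\cap_{j=1}^{i-1}\{z_{(j)}\le z_m\})$ then accounts for feasibility of the conditioning event under the maximum constraint: if any $z_{(j)}$ with $j<i$ exceeds $z_m$, the conditioning values are inconsistent with $z_m$ being the maximum, and the density is set to zero.

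The main obstacle I anticipate is the careful bookkeeping of which constraints are encoded in each factor at each stage. In particular, the numerator treats $z_{(i)}$ as a density argument, so the bound $z_{(i)}\le z_m$ enters through the implicit support of the density, while the remaining coordinates $z_{(i+1)},\ldots,z_{(K-1)}$ appear as tail probabilities that must match between the numerator's integral and the denominator's CDF for Bayes' rule to yield the clean $\Phi_1/\Phi_2$ ratio. Once this bookkeeping is consistent and the indicator is introduced to exclude infeasible conditioning values, the identity follows directly from the definitions of $\Phi_1$ and $\Phi_2$.
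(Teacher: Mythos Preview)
Your proposal is correct and follows essentially the same route as the paper: rewrite the conditioning on $z_m=\max(\underline{z})$ as conditioning on the Gaussian value $z_m$ together with the inequality constraints $\{z_{(j)}\le z_m\}$, apply Bayes' rule, and identify the numerator and denominator with $\Phi_1$ and $\Phi_2$ while the constraints on the already-conditioned coordinates $z_{(1)},\ldots,z_{(i-1)}$ collapse to the indicator. The only cosmetic difference is that the paper carries the full constraint set $\{z_{(j)}\le z_m\}_{j=1}^{K-1}$ through the Bayes step and then splits off the indicator, whereas you isolate the residual constraints $j\ge i+1$ up front and append the indicator as a feasibility check; the resulting computation is identical.
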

	\begin{proof}
		Using  Bayes' Theorem, we have:
		\begin{align}
			&P\left(z_{(i)}\,|\,z_{(i-1)},\cdots,z_{(1)},z_m=\max\{\underline{z}\}\right)=\nonumber \\
			&P\left(z_{(i)}\,|\,z_{(i-1)},\cdots,z_{(1)},z_m, \{z_{(j)}\le z_m\}_{j=1}^{K-1}\right)=\\
			&\frac{P(z_{(i)},\{z_{(j)}\le z_m\}_{j=1}^{K-1}\,|\,z_{(i-1)},\cdots,z_{(1)},z_m)}{P(\{z_{(j)}\le z_m\}_{j=1}^{K-1}\,|\,z_{(i-1)},\cdots,z_{(1)},z_m)}. \label{basyes 1st} \nonumber
		\end{align}
		The denominator satisfies
		\begin{equation}
			\begin{aligned}\nonumber
				P(\{z_{(j)}&\le z_m\}_{j=1}^{K-1}\,|\,z_{(i-1)},\cdots,z_{(1)},z_m)=\\
				&F(z_{(i+1)} = z_m,\cdots,z_{(K-1)} = z_m|z_{(i-1)},\cdots,z_{(1)},z_{m})\cdot\\
				&\mathbbm{1} (z_{(i-1)}\le z_m , \cdots, z_{(1)}\le z_m),
			\end{aligned}
		\end{equation}
		while the numerator
		\begin{align}\nonumber
			&P(z_{(i)},\{z_{(j)}\le z_m\}_{j=1}^{K-1}\,|\,z_{(i-1)},\cdots,z_{(1)},z_m)=\\\nonumber
			&\int_{-\infty}^{z_m}\cdots 	\int_{-\infty}^{z_m}P(z_{(i+1)},\cdots,z_{(K-1)},z_{(i)}|z_{(i-1)},\cdots,z_{(1)},z_{m})\cdot \\\nonumber 
			&\;\;\quad\quad\quad\quad \quad\quad dz_{(i+1)} \cdots dz_{(K-1)}\mathbbm{1} (z_{(i-1)}\le z_m , \cdots, z_{(1)}\le z_m).
		\end{align}
	\end{proof}

	Equipped with the explicit expression of the joint distribution of the local maximum and its neighbors, we are ready to define the corresponding p-value. 
	The p-value is essential in order to apply the Benjamini-Hochberg procedure. 
	In this work the p-value at $\underline{s}=[s_1,\ldots s_n]$ is defined by: 
	\begin{equation} \label{eq:p_val}
		\text{p-value}(\underline{s}) = \text{Prob}(z_1>s_1 , \cdots, z_n>s_n ).
	\end{equation}
	We emphasize that alternative definitions of p-value may be used, and we chose the conservative~\eqref{eq:p_val} to mitigate the computational complexity, which is the focus of the next section. 
	Our proposed scheme is summarized in Algorithm~\ref{alg:two sample}.

	\section{Numerical experiments}\label{sec:exp}
	
	Computing the p-value corresponding to~\eqref{eq:joint_distribution} is computationally demanding as one needs to evaluate  multi-dimensional integrals. 
	In particular, the computational burden  grows quickly with $K$: the number of samples we consider around each local maximum. 
	Therefore, we implemented our algorithm for a single neighbor (namely, $K=2$) at a distance $d$ from the local maxima; we refer to this algorithm as a two-sample test. 	
	In addition, to mitigate the computational burden, we approximate the marginal distribution as  a 
	truncated Gaussian distribution. 
	This is only an approximation that we used to alleviate the computational load since the marginal of a truncated Gaussian distribution is not necessarily a truncated Gaussian distribution. 
	Nevertheless, we found empirically that this approximation provides good numerical results.
	
	\begin{algorithm}[h]
		\DontPrintSemicolon
		\caption{$K$-sample test}\label{alg:two sample}
		\KwInput{Measurement $y(t)$, desired FDR level $\alpha$,  a smoothing kernel $w_\gamma(t)$, and number of  neighbors $K-1$}
		\KwOutput{A set $M$ of detected signal occurrences}
		Calculate $y_\gamma(t)$ according to (\ref{eq: smoothing a sig}) \;
		Find a set $T$~\eqref{eq:T} of all local maxima of $y_\gamma(t)$\;
		Calculate  the p-value for each point  in $T$ and its $(K-1)$ neighbors according to~\eqref{eq:p_val}\;
		Get a set of estimated locations of signal occurrences $M \subset T$ using the Benjamini-Hochberg procedure for a desired FDR level $\alpha$

	\end{algorithm}

	We compared Algorithm~\ref{alg:two sample} with $K=2$ with the one-sample test of~\cite{schwartzman2011multiple}, Algorithm~\ref{alg:one sample}. 
	We set the length of the measurement to $L=1000$, with 10 
	truncated Gaussian signals 
	\begin{equation} \label{eq:signal}
		ah_j(t) = \frac{a}{b}\phi\left( \frac{t-\tau_j}{b}\right) \mathbbm{1}[-cb,cb],
	\end{equation}
	with $b = c = 3$. 
	The parameter $b$, the standard deviation of the Gaussian, controls the width of the signal, and has a significant influence on the performance of the algorithms. 
	Following~\cite{schwartzman2011multiple}, we generated the noise $z(t)$ with parameter $\nu >0$ according to $z(t) =	\sigma \int_{-\infty}^{\infty} \frac{1}{\nu} \phi\left(\frac{t-s}{\nu}\right)dB(s),$
	where $B(s)$ is Brownian motion and $\sigma$ is a noise parameter. Notice that this noise $z(t)$ is a convolution of white Gaussian noise of zero mean and variance $\sigma^2$ with a CDF of Gaussian with  zero mean and variance $\nu$. The latter can be seen as a parameter governing the noise correlation level.
	This specific noise distribution was chosen because of available good approximations of its three first moments:
	$\sigma_\gamma^2 = \frac{\sigma_0^2}{2\sqrt{\pi}\nu}$ , $\lambda_2 = \frac{\sigma_0^2}{4\sqrt{\pi}\nu^3}$ 
	and $\lambda_4 = \frac{3\sigma_0^2}{9\sqrt{\pi}\nu^5}$. Those approximations will be useful in calculating p-values according to~\eqref{eq:pval 1d}.
	The smoothing kernel is chosen to be $w_\gamma(t) = \frac{1}{\gamma}\phi(t/\gamma)$ with varying values of $\gamma$.
	We set the  FDR level to be $\alpha=0.05$.
	
	Following~\cite{schwartzman2011multiple}, we define a true positive to be any estimator inside the support of a true signal occurrence. 
	Similarly, a false positive is an estimator  outside the support of any true signal occurrence. 
	We compare the algorithms 
	for different values of $\nu$ (the variance of the Gaussian used to generate a correlated noise),  $b$ (the variance of  the Gaussian signal), and $\gamma$ (the variance of the Gaussian smoothing kernel). For each set of parameters, we present the average power and  false discovery rate, averaged over 1000 trials.

	\begin{figure}
		\begin{subfigure}[ht]{0.15\textwidth}
		\centering
		\includegraphics[width=\columnwidth]{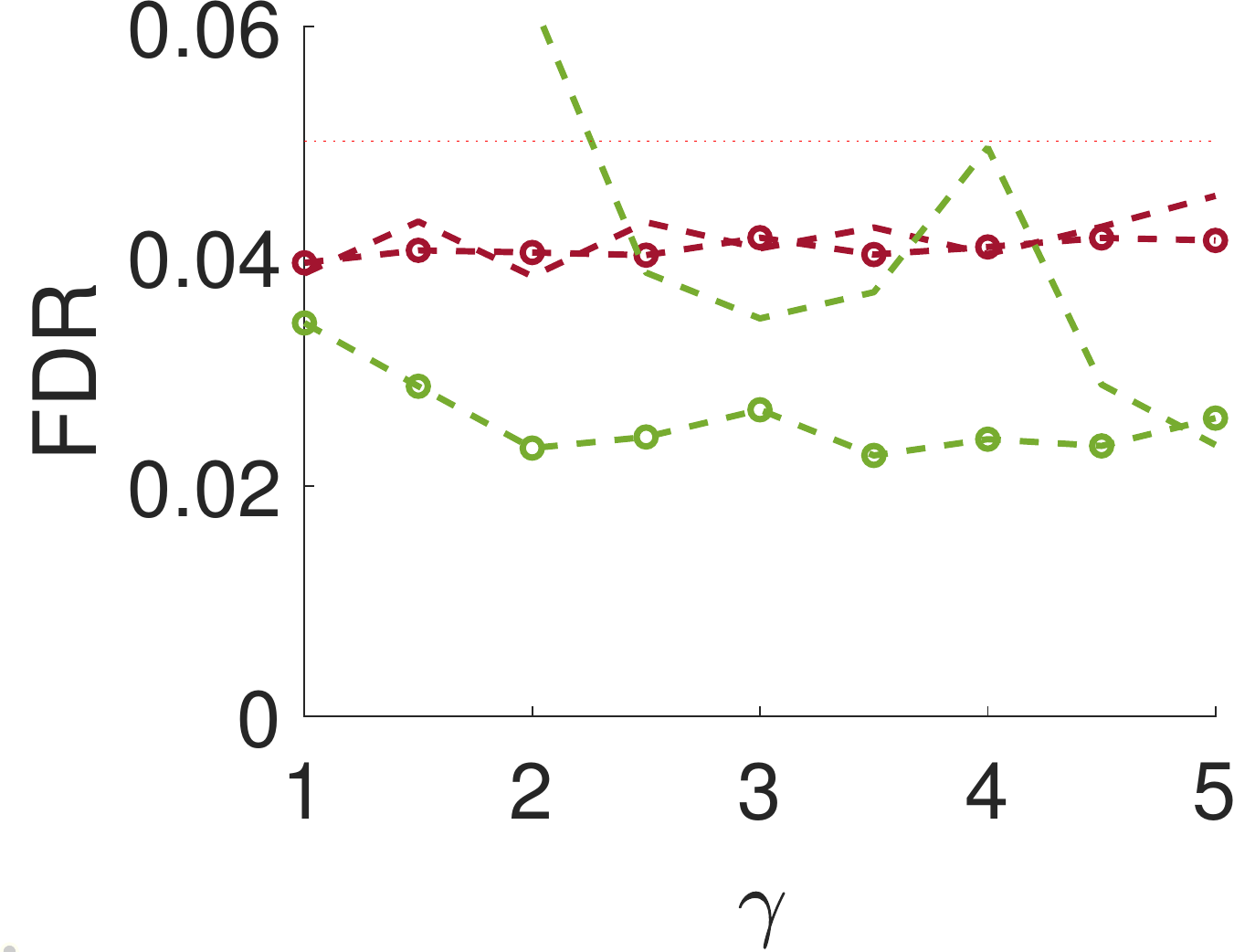}
		\caption{\label{fig:neig_2_FA_nu_3} $\nu = 3$}
	\end{subfigure}
	\hfill
	\begin{subfigure}[ht]{0.15\textwidth}
	\centering
	\includegraphics[width=\columnwidth]{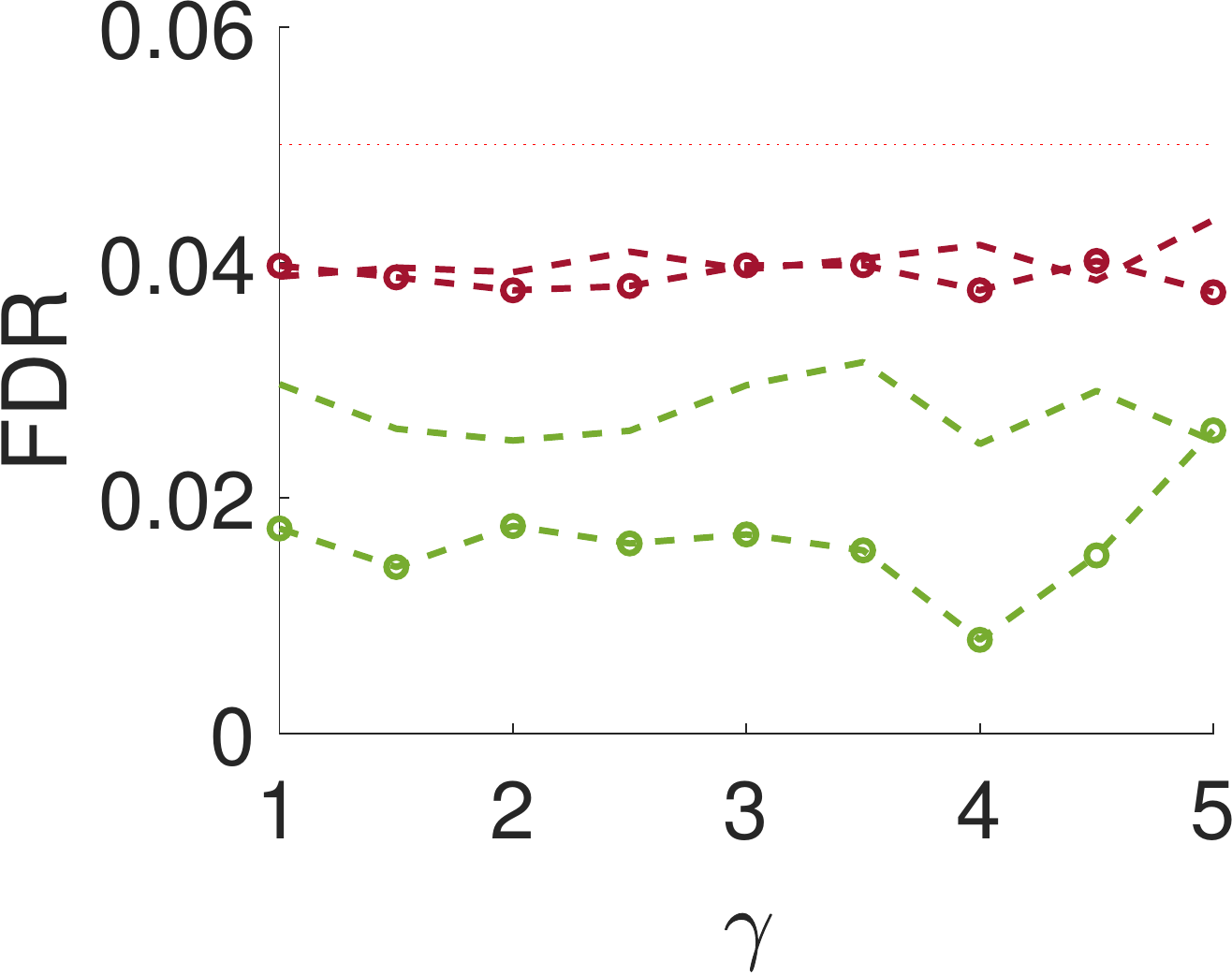}
	\caption{\label{fig:neig_2_FA_nu_4} $\nu = 4$}
\end{subfigure}
\hfill
\begin{subfigure}[ht]{0.15\textwidth}
\centering
\includegraphics[width=\columnwidth]{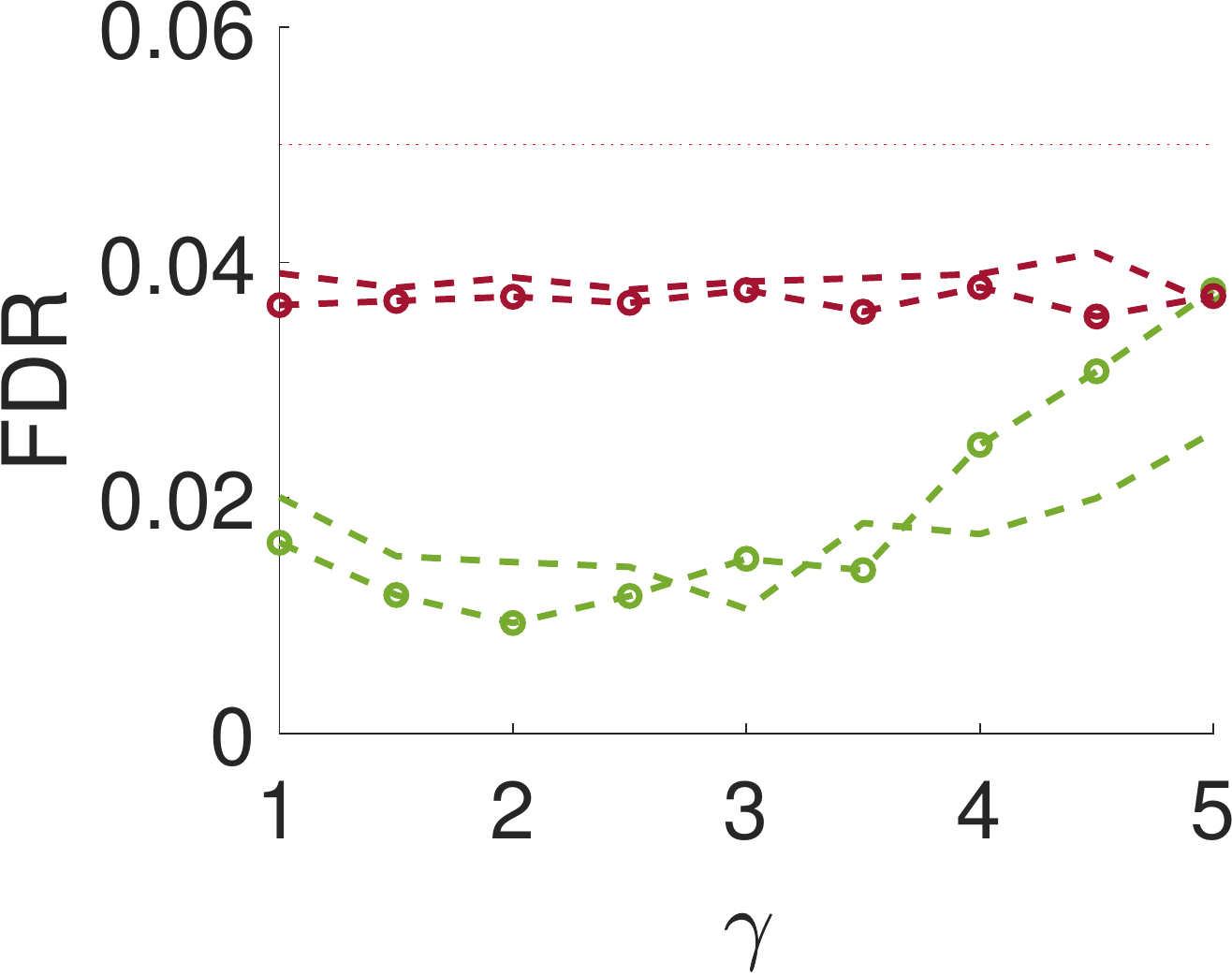}
\caption{\label{fig:neig_2_FA_nu_5} $\nu = 5$}
\end{subfigure}

\begin{subfigure}[ht]{0.15\textwidth}
\centering
\includegraphics[width=\columnwidth]{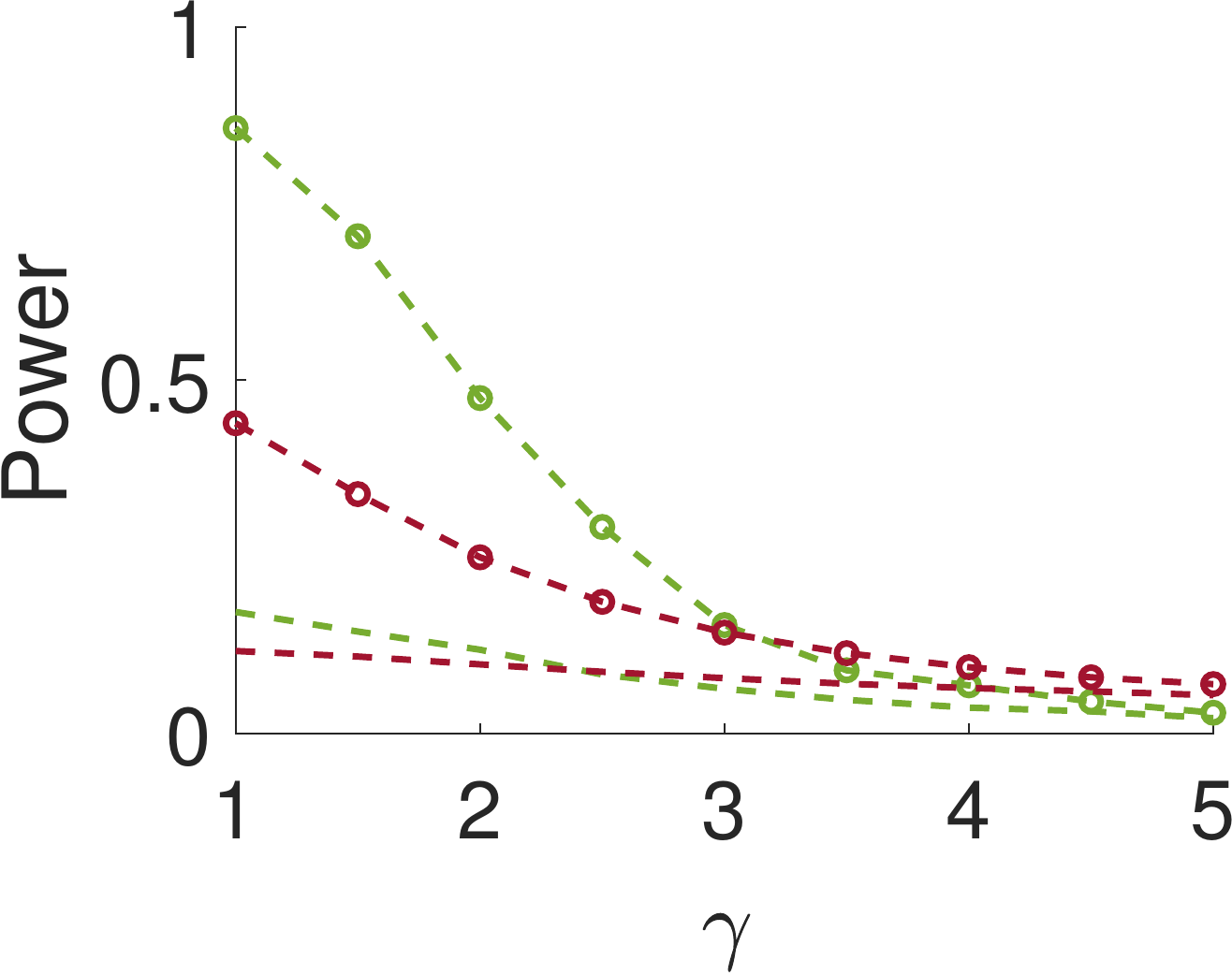}
\caption{\label{fig:neig_2_Power_nu_3} $\nu = 3$}
\end{subfigure}
\hfill
\begin{subfigure}[ht]{0.15\textwidth}
\centering
\includegraphics[width=\columnwidth]{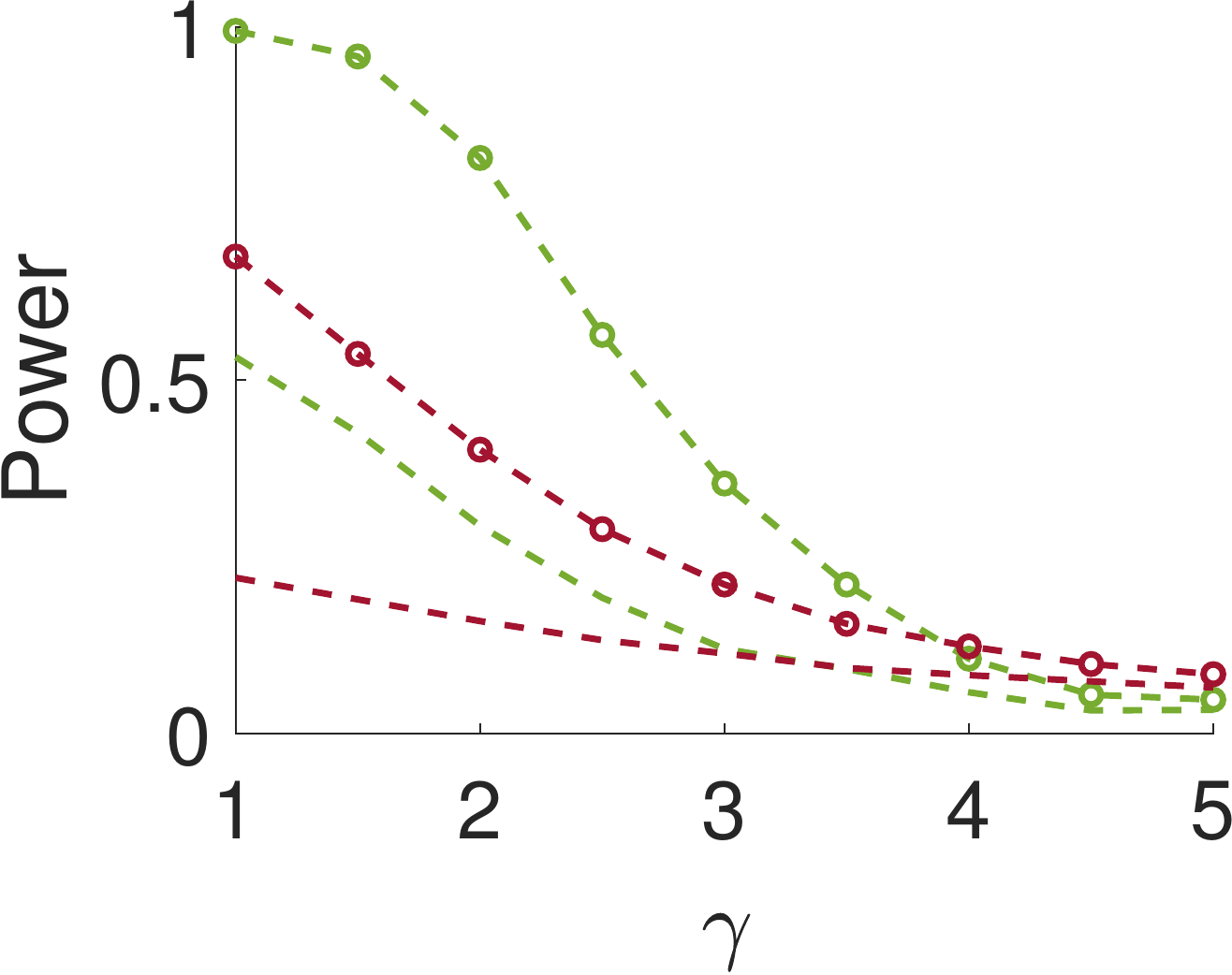}
\caption{\label{fig:neig_2_Power_nu_4} $\nu = 4$}
\end{subfigure}
\hfill
\begin{subfigure}[ht]{0.15\textwidth}
\centering
\includegraphics[width=\columnwidth]{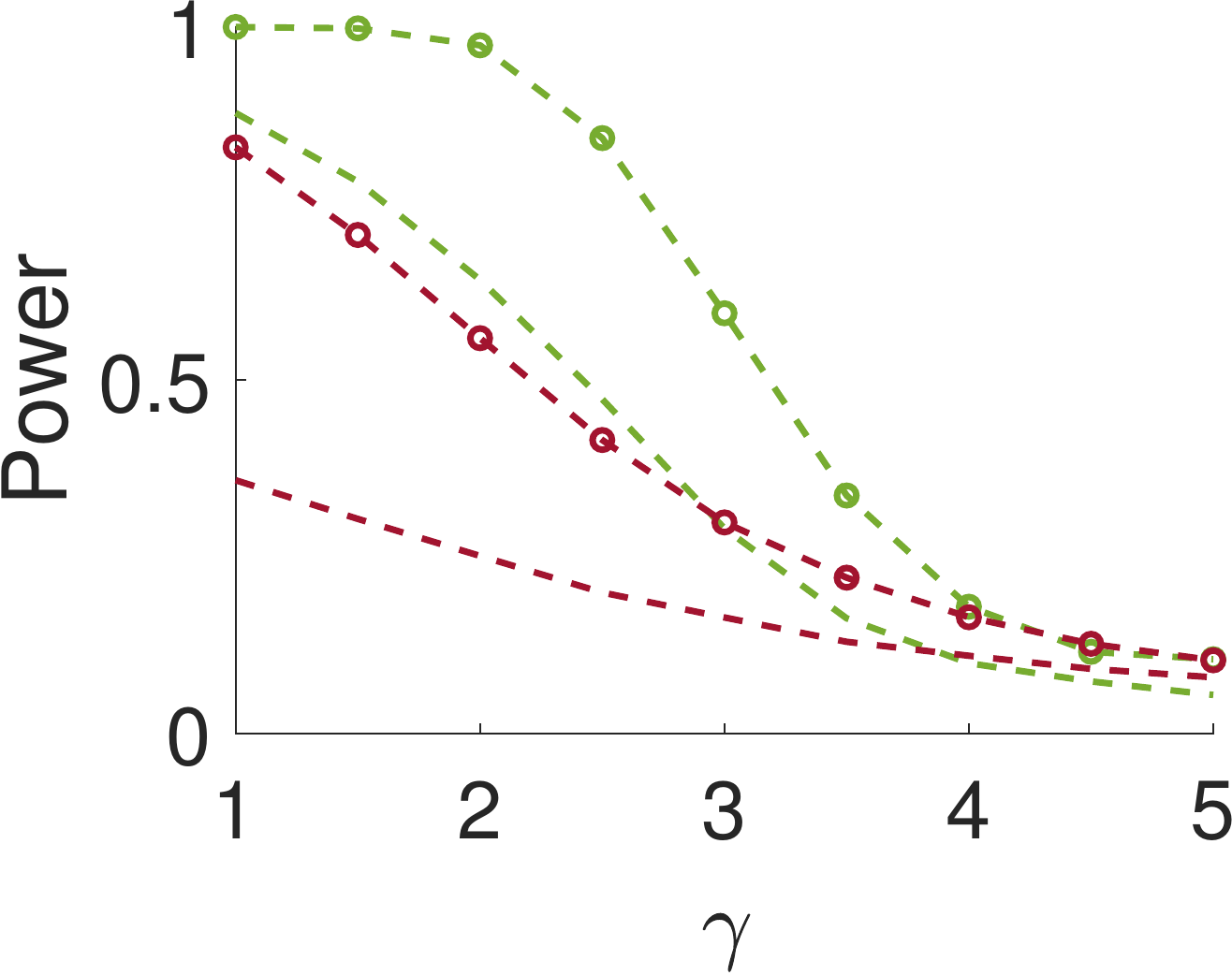}
\caption{\label{fig:neig_2_Power_nu_5} $\nu = 5$}
\end{subfigure}
\caption{\label{fig:dif b's for d =2}
Experiment with a neighbor at distance of two samples $d =  2$ and $a = 5$, where $a$ is the amplitude of the signal.
The red and green curves present, respectively, Algorithm~\ref{alg:one sample} and  Algorithm~\ref{alg:two sample}. We use '--' and '-o-', respectively, for $b = 3$ and $b = 2$, where $b$ is a measure for the signal's width~\eqref{eq:signal}.
}
\vskip -15pt
\end{figure}

Figure \ref{fig:dif b's for d =2} shows the power and FDR of Algorithms~\ref{alg:one sample} and~\ref{alg:two sample}, for a range of values  for  the parameters $\nu, \gamma, b$ and with a neighbor at distance of $d=2$ (for Algorithm~\ref{alg:two sample}).
While both tests control the FDR below $\alpha=0.05$  almost everywhere,
the power of the two-sample test is clearly greater than the power of the one-sample test.
The gap increases for smaller values of $b$, namely, for narrower signals. 
This phenomenon can be explained as for wider signals   the adjacent samples are similar to the local maxima; in the extreme, it can be thought of as sampling twice the same local maximum.

We repeat the same experiment with $d=5$, and the results are presented in Figure~\ref{fig:dif b's for d =5}.
As can be seen, the results for large $\nu$ are quite similar to the case of $d=2$. However, for small values of $\nu$, when 
the noise is weakly correlated, the power is clearly inferior.  This happens since, in this case, a sample $d=5$ entries away from the local maximum does not provide much information on the local maximum itself.

\begin{figure}
\begin{subfigure}[ht]{0.15\textwidth}
\centering
\includegraphics[width=\columnwidth]{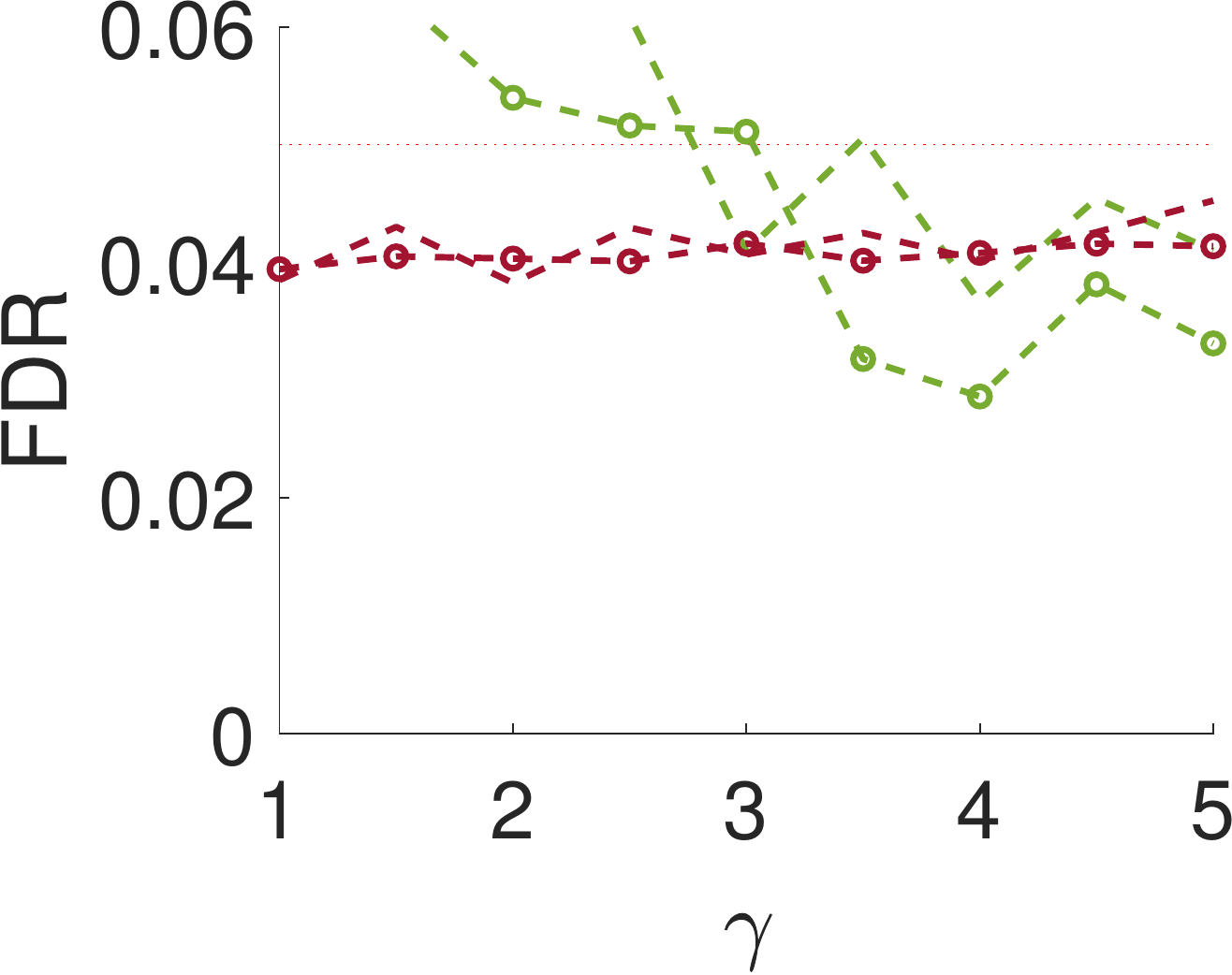}
\caption{\label{fig:neig_5_FA_nu_3} $\nu = 3$}
\end{subfigure}
\begin{subfigure}[ht]{0.15\textwidth}
\centering
\includegraphics[width=\columnwidth]{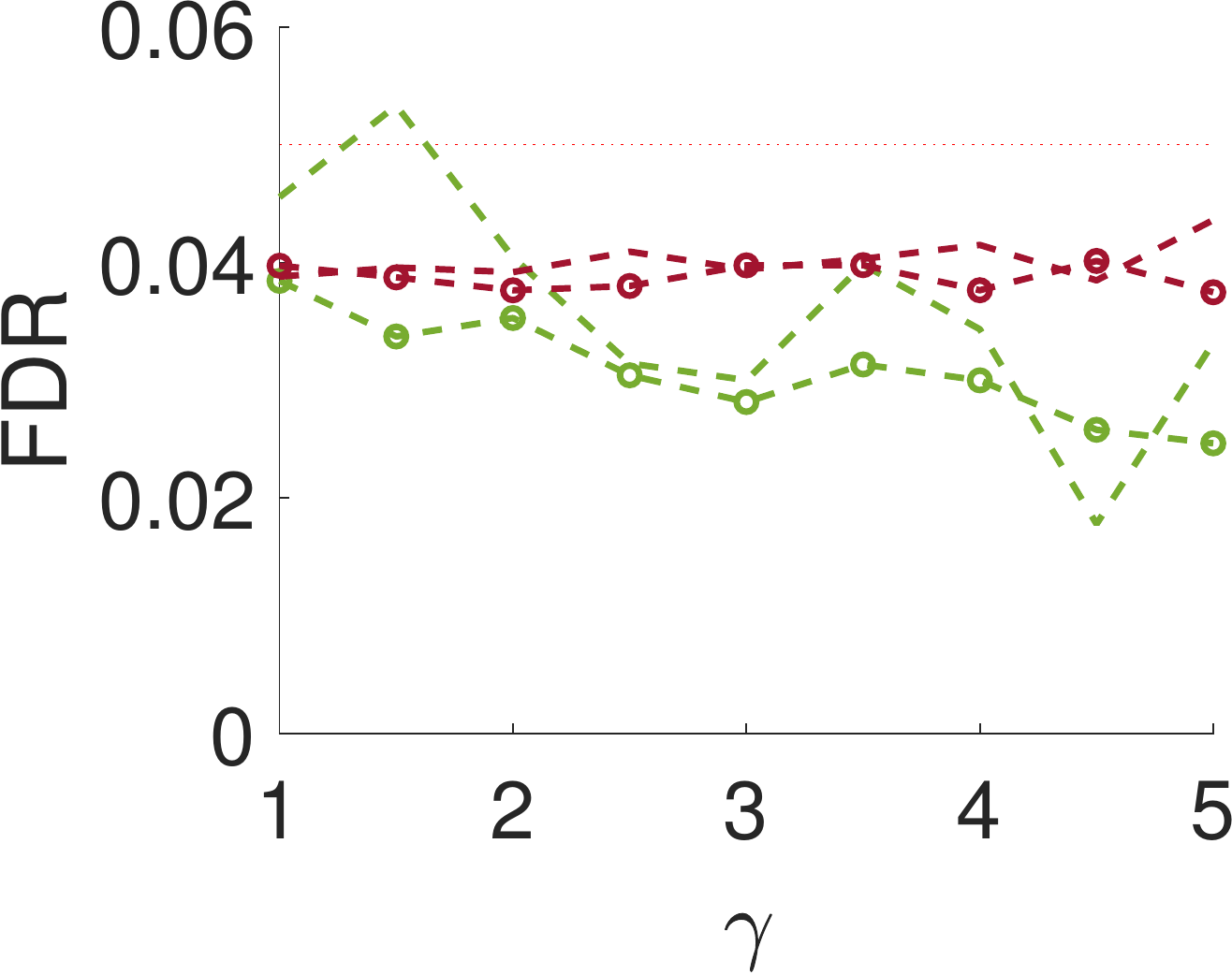}
\caption{\label{fig:neig_5_FA_nu_4} $\nu = 4$}
\end{subfigure}
\begin{subfigure}[ht]{0.15\textwidth}
\centering
\includegraphics[width=\columnwidth]{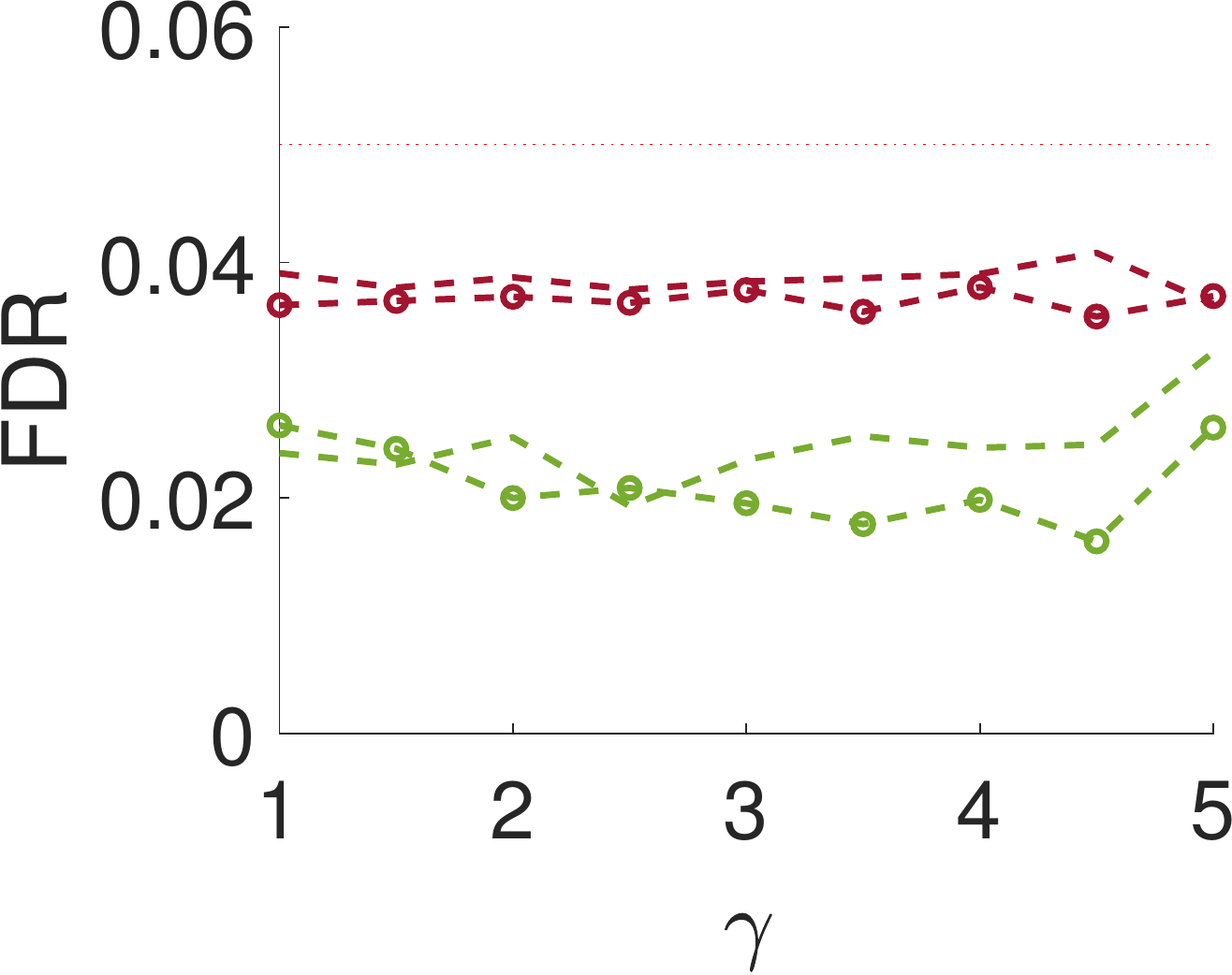}
\caption{\label{fig:neig_5_FA_nu_5} $\nu = 5$}
\end{subfigure}

\begin{subfigure}[ht]{0.15\textwidth}
\centering
\includegraphics[width=\columnwidth]{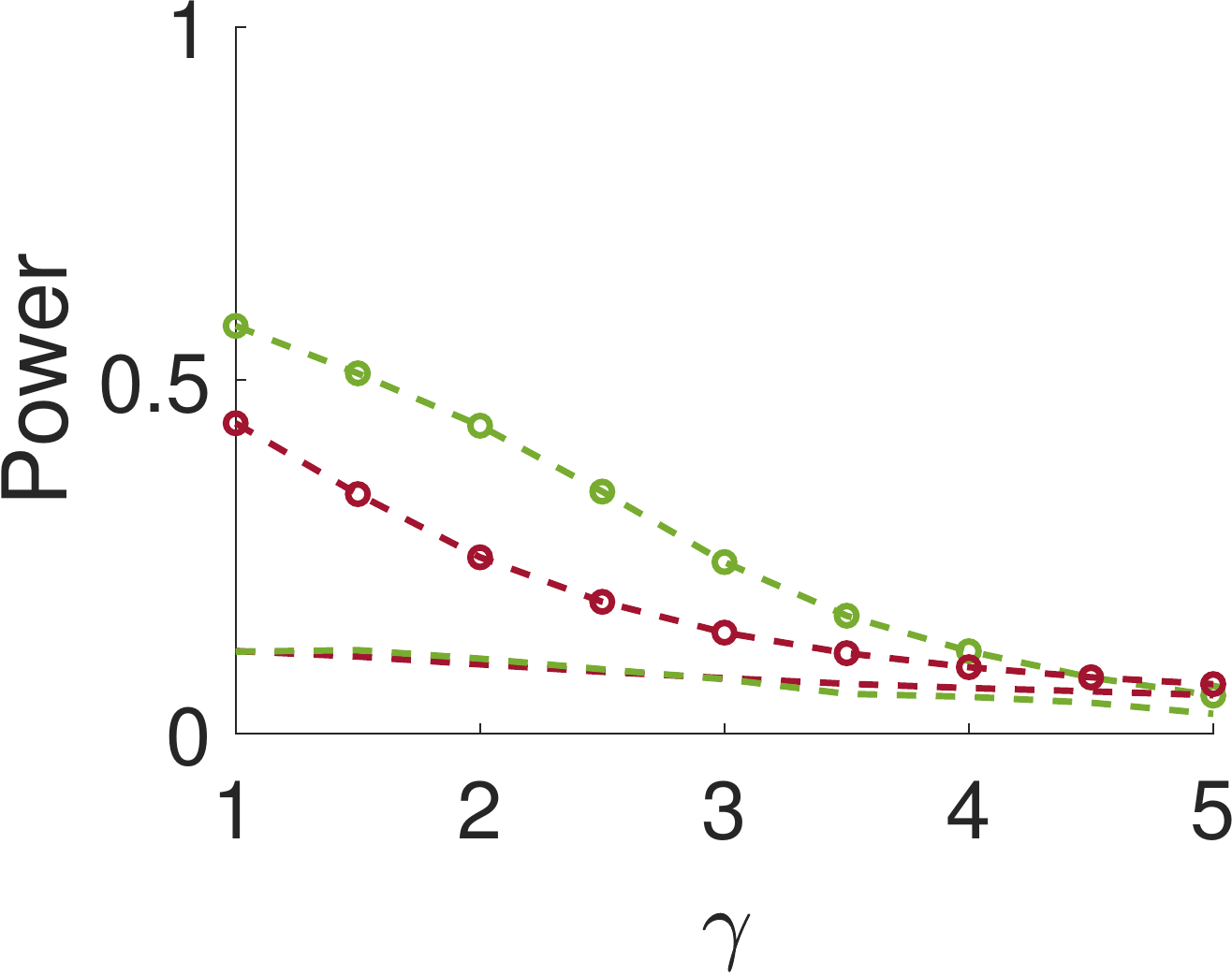}
\caption{\label{fig:neig_5_Power_nu_3} $\nu = 3$}
\end{subfigure}
\begin{subfigure}[ht]{0.15\textwidth}
\centering
\includegraphics[width=\columnwidth]{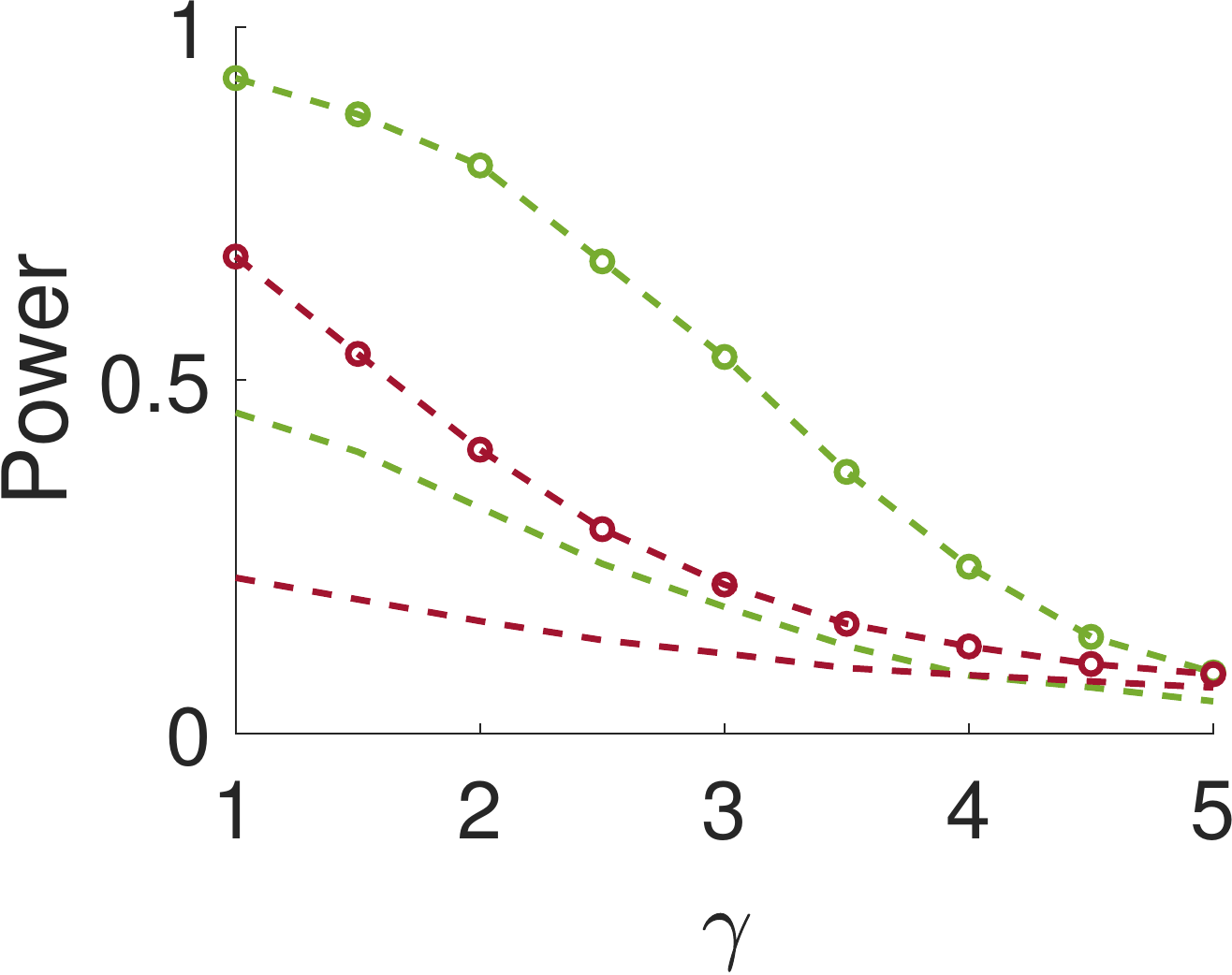}
\caption{\label{fig:neig_5_Power_nu_4} $\nu =4$}
\end{subfigure}
\begin{subfigure}[ht]{0.15\textwidth}
\centering
\includegraphics[width=\columnwidth]{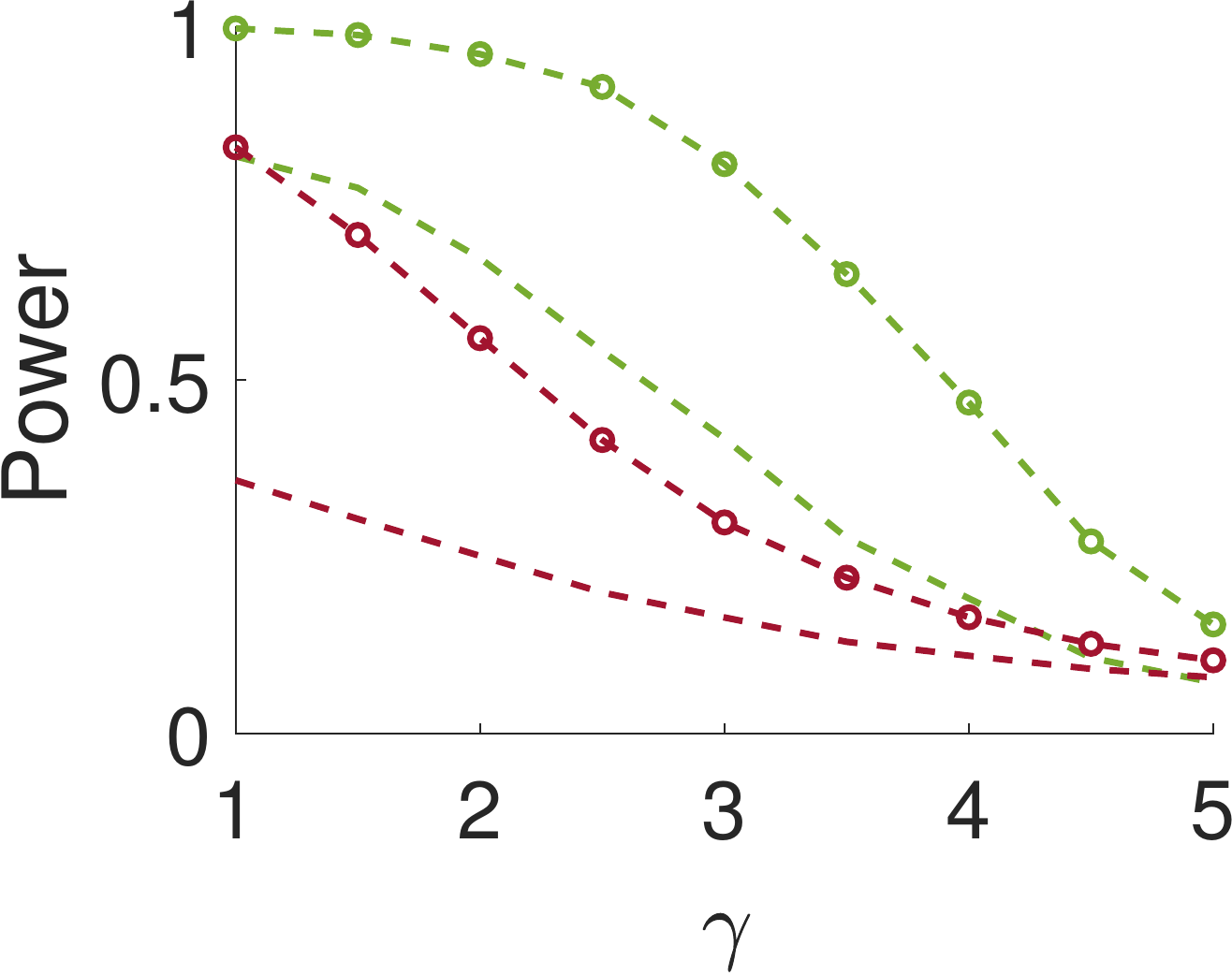}
\caption{\label{fig:neig_5_Power_nu_5} $\nu = 5$}
\end{subfigure}
\caption{\label{fig:dif b's for d =5}	
Experiment with a neighbor at distance of two samples $d =  5$ and $a = 5$, where $a$ is the amplitude of the signal.
The red and green curves present, respectively, Algorithm~\ref{alg:one sample} and  Algorithm~\ref{alg:two sample}. We use '--' and '-o-', respectively, for $b = 3$ and $b = 2$, where $b$ is a measure for the signal's width~\eqref{eq:signal}.
}
\end{figure}

Figure~\ref{fig:dif d's} compares Algorithm~\ref{alg:two sample} for  $d=3$ and $d=9$.
For all values of $\nu$, the case of $d=3$ outperforms $d=9$ for small values of $\gamma$ (the parameter of the smoothing kernel). For large $\gamma$, on the other hand, the latter outperforms the former. 
This happens since large $\gamma$  implies smoother noise and thus examining samples farther away from the local maxima provides more information.

\begin{figure}
\begin{subfigure}[ht]{0.15\textwidth}
\centering
\includegraphics[width=\columnwidth]{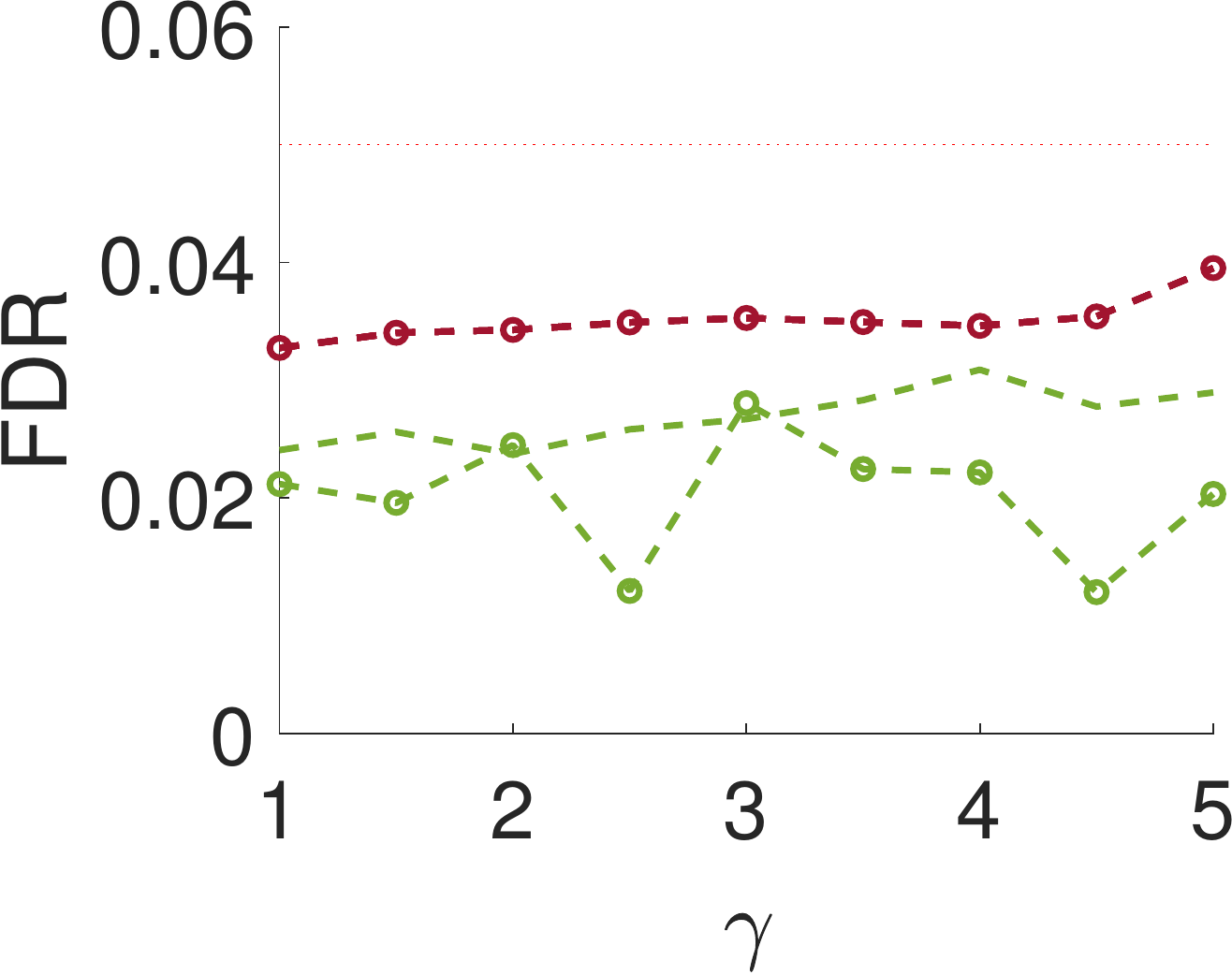}
\caption{\label{fig:dis3n9_FA_nu_6} $\nu = 6$}
\end{subfigure}
\begin{subfigure}[ht]{0.15\textwidth}
\centering
\includegraphics[width=\columnwidth]{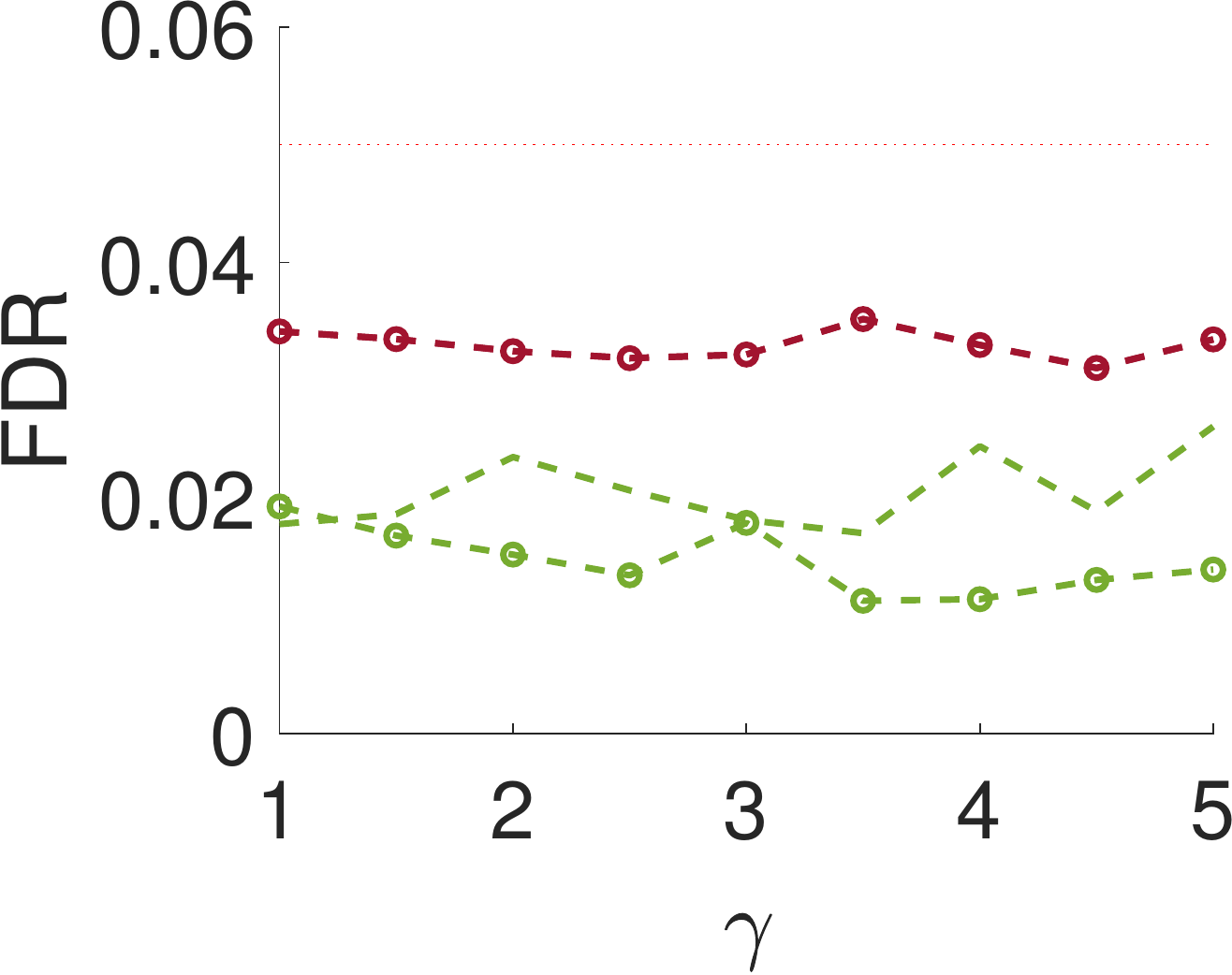}
\caption{\label{fig:dis3n9_FA_nu_7} $\nu = 7$}
\end{subfigure}
\begin{subfigure}[ht]{0.15\textwidth}
\centering
\includegraphics[width=\columnwidth]{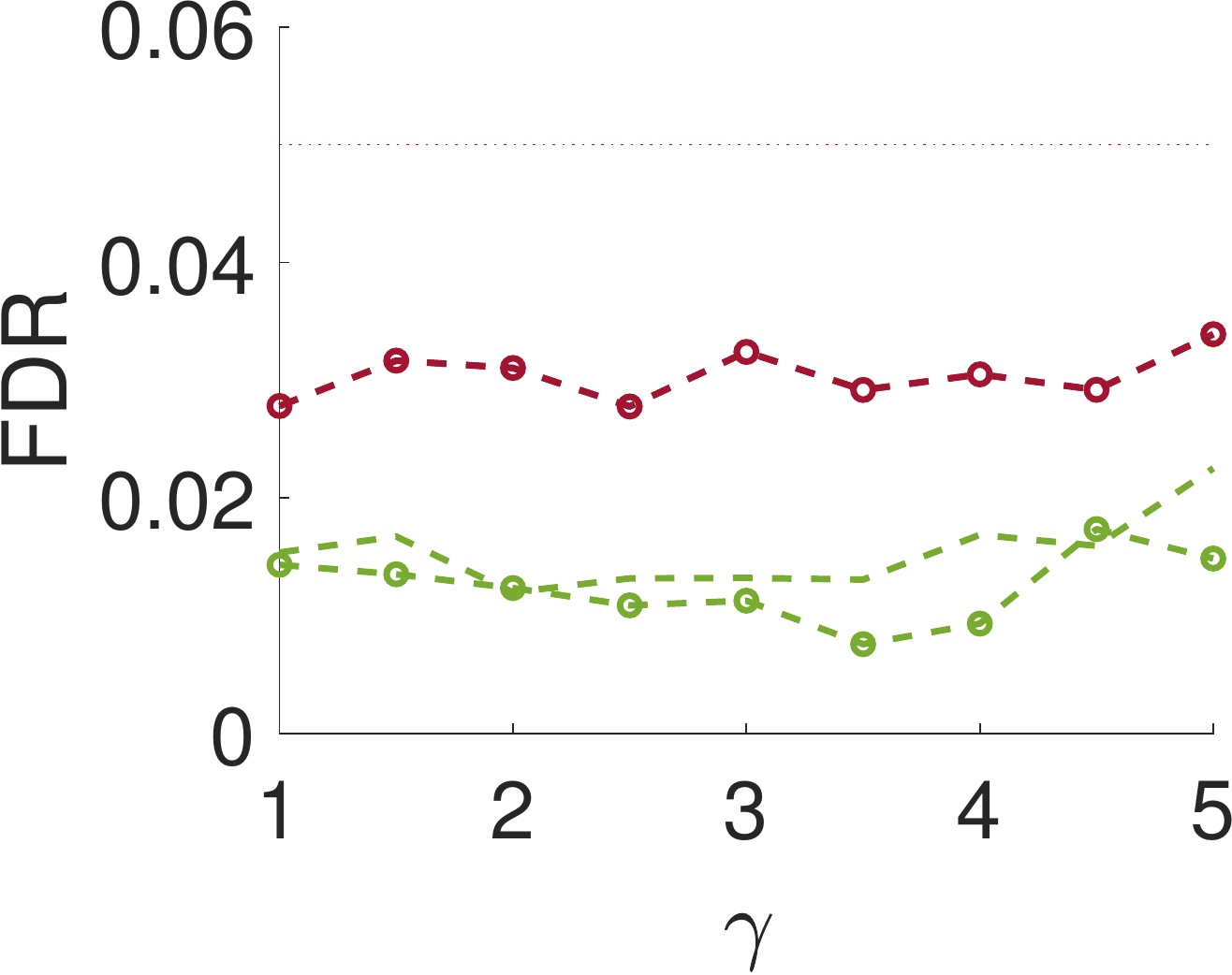}
\caption{\label{fig:dis3n9_FA_nu_8} $\nu = 8$}
\end{subfigure}

\begin{subfigure}[ht]{0.15\textwidth}
\centering
\includegraphics[width=\columnwidth]{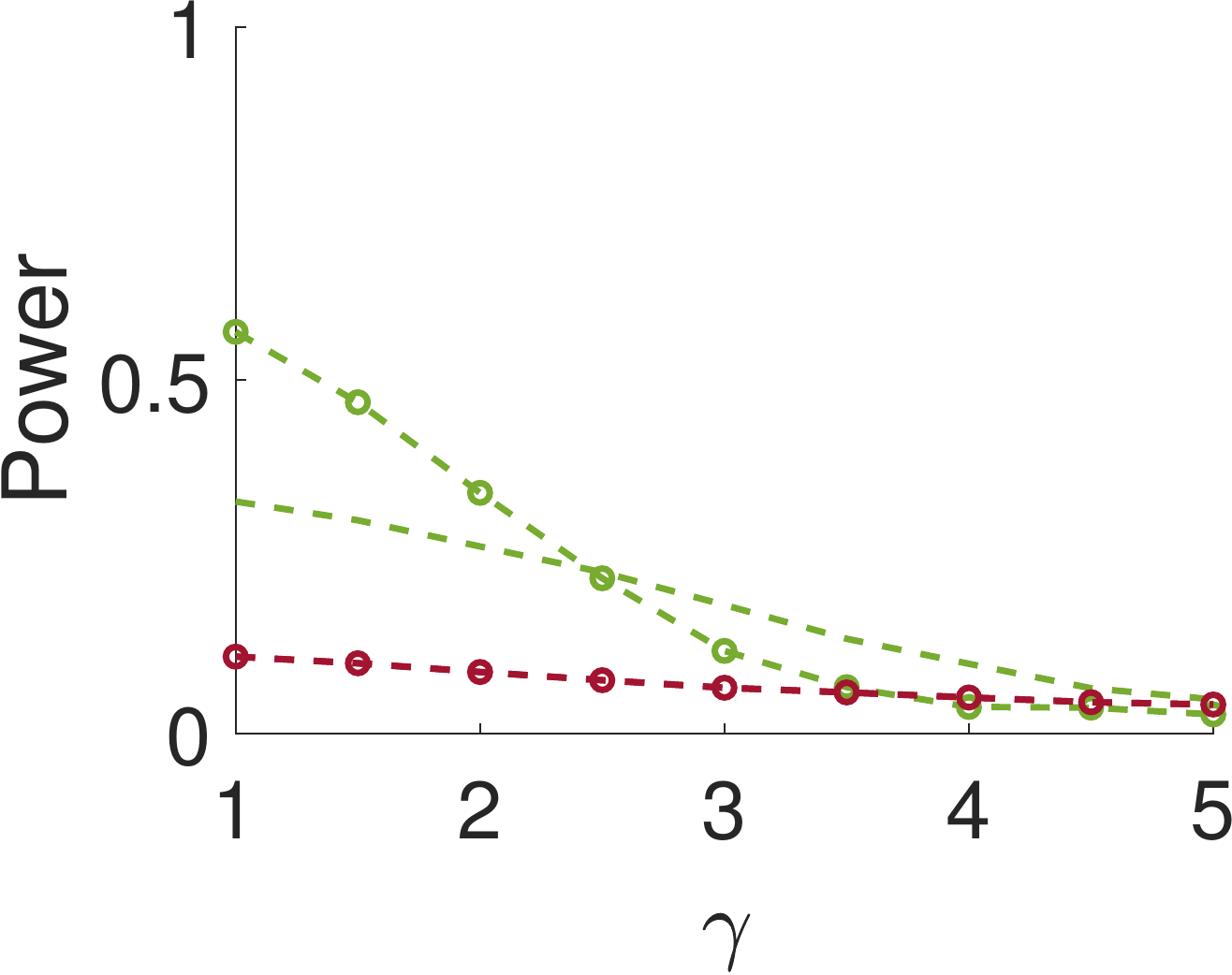}
\caption{\label{fig:dis3n9_Power_nu_6} $\nu = 6$}
\end{subfigure}
\begin{subfigure}[ht]{0.15\textwidth}
\centering
\includegraphics[width=\columnwidth]{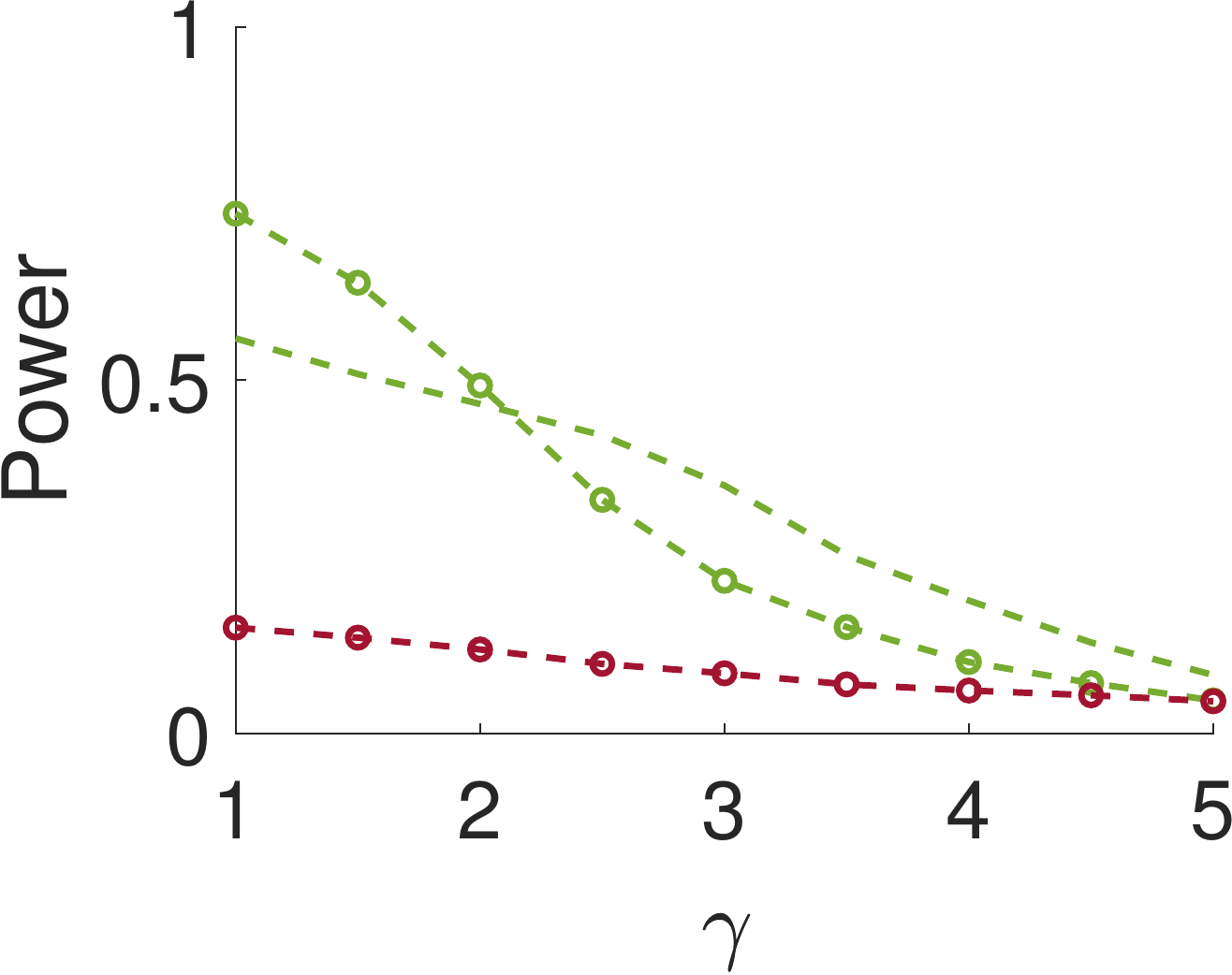}
\caption{\label{fig:dis3n9_Power_nu_7} $\nu =7$}
\end{subfigure}
\begin{subfigure}[ht]{0.15\textwidth}
\centering
\includegraphics[width=\columnwidth]{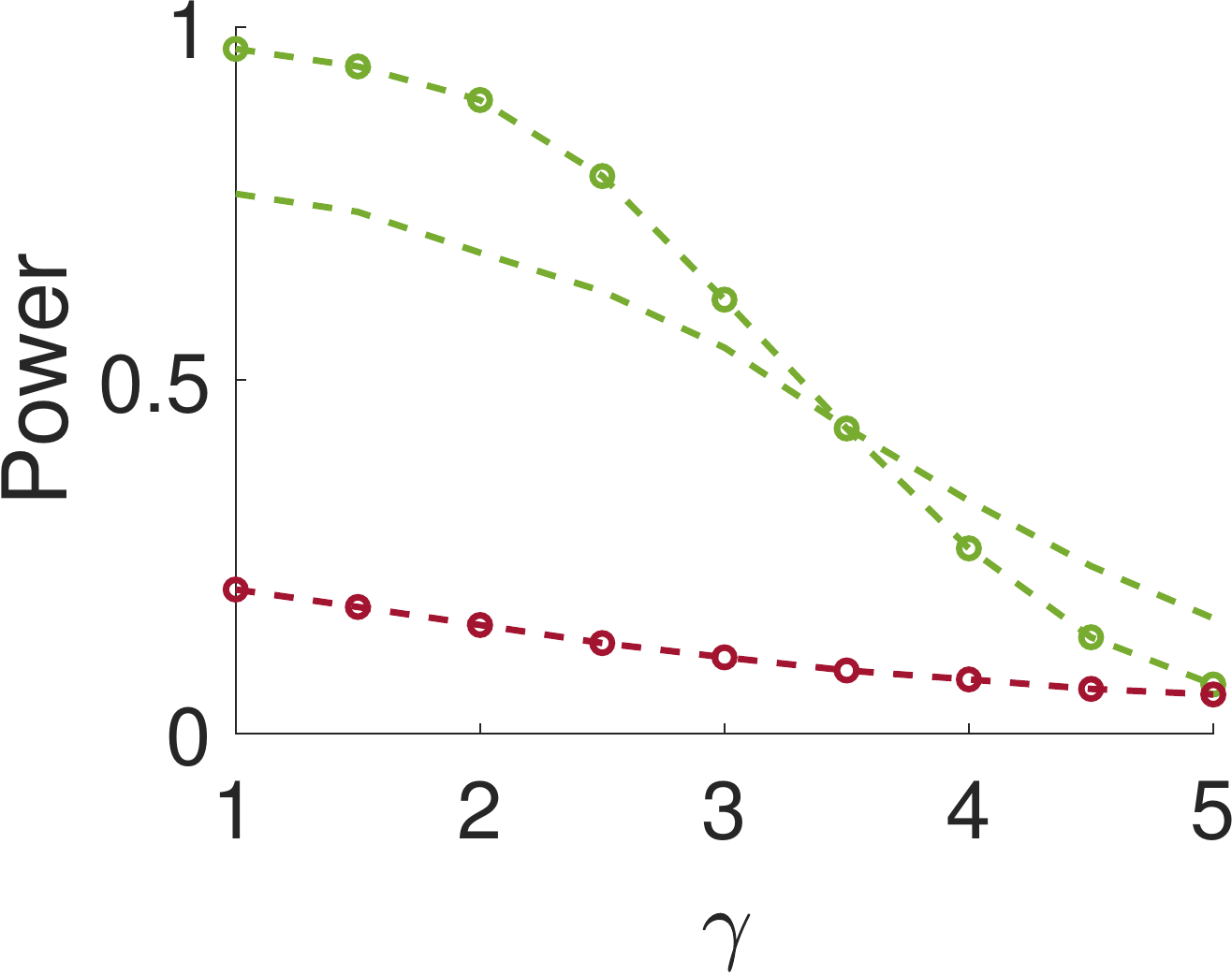}
\caption{\label{fig:dis3n9_Power_nu_8} $\nu =8$}
\end{subfigure}
\caption{\label{fig:dif d's}	
The red and green curves present, respectively, Algorithm~\ref{alg:one sample} and  Algorithm~\ref{alg:two sample}.
We use '--' and '-o-', respectively, for  Algorithm~\ref{alg:two sample} with $d=3$ and $d = 9$.
We set $b=3$ and $a=4$.
}
\vskip -15pt
\end{figure}

\section{Conclusions and discusssion}\label{sec:conclusion}

This paper introduces a new statistical test for detecting signal occurrences in a noisy measurement that controls the FDR below a desired level. 
In particular, our procedure does not only consider the local maxima of the measurement, but also its  neighbors. 
We numerically show that our two-sample test outperforms the one-sample test of~\cite{schwartzman2011multiple}  
for correlated noise (large enough $\nu$).
In particular, it seems that  for~$\nu$ about twice the variance of the signal $b$ and neighboring distance of around $d\approx 5$, our algorithm yields good performance.
The FDR of our algorithm is usually lower than the FDR of the one-sample test, probably due to the conservative  p-value definition~\eqref{eq:p_val}.

The main impediment of the proposed procedure is evaluating the multi-dimensional integrals~\eqref{eq:generalCase}, limiting the current implementation to a two-sample test. 
We aim to extend this work to a $K$-sample test with $K>2$. 
To this end, we intend to examine different approximation schemes of the integrals, perhaps along the lines of the truncated Gaussian approximation we use in this work. 
We expect that such extension will significantly improve the performance of the algorithm.  
Another future work is to follow the blueprint of~\cite{cheng2017multiple} and extend our $K$-sample test to images. This might be particularly important for the particle picking task: an essential step in the computational pipeline of single-particle cryo-electron microscopy~\cite{bendory2020single}.


\bibliographystyle{plain}


\end{document}